
\documentclass[10pt,journal,compsoc]{IEEEtran}

%
\ifCLASSOPTIONcompsoc
  \usepackage[nocompress]{cite}
\else
  \usepackage{cite}
\fi

%
\ifCLASSINFOpdf
\else
\fi
%
%

%

\usepackage{subcaption}
\usepackage{float}
\usepackage{amsfonts}
\usepackage{amsmath}
\usepackage{url}
\usepackage{caption}
\usepackage[linesnumbered,ruled]{algorithm2e}
\usepackage{algpseudocode}
\usepackage{array}
\usepackage[pdftex]{graphicx}
\usepackage{amsthm}
\usepackage{color}
\usepackage{enumitem}

\newtheorem{theorem}{Theorem}
\newtheorem{mydef}{Definition}
\newtheorem{corollary}{Corollary}
\newtheorem{lemma}{Lemma}

\newcommand{\Bstar}{$\bf{B_\Delta^\star}$}
\newcommand{\Dstar}{$\bf{C^\star}$}
\newcommand{\Estar}{$\bf{C_2^\star}$}
\newcommand{\Cstar}{$\bf{B_D^\star}$}

\def\th {\mathrm{th}}


\hyphenation{op-tical net-works semi-conduc-tor}

\begin{document}
%
\title{Towards Optimal Grouping and Resource Allocation for Multicast Streaming in LTE}
%
%
%
%
%

\author{Sadaf~ul~Zuhra,~
        Prasanna~Chaporkar~
        and~Abhay~Karandikar} 
\IEEEtitleabstractindextext{%
\begin{abstract}
Multimedia traffic is predicted to account for $82$\% of the total data traffic by the year 2020. With the increasing popularity of video streaming applications like YouTube, Netflix, Amazon Prime Video,
popular video content is often required to be delivered to a large number of users simultaneously. Multicast transmission can be used for catering to such 
applications efficiently. 
The common content can be transmitted to the users on the same resources resulting in considerable resource conservation. 
This paper proposes various schemes for efficient grouping and resource allocation for multicast transmission in LTE.
The optimal grouping and resource allocation problems are shown to be NP-hard and so, we propose heuristic algorithms for both these problems. We also 
formulate a Simulated Annealing based algorithm to approximate the optimal 
resource allocation for our problem. The LP-relaxation based resource allocation proposed by us results in allocations very close to the estimated optimal.
\end{abstract}

\begin{IEEEkeywords}
Multicast, NP-hardness, Video streaming, LTE, MBMS, Resource allocation.
\end{IEEEkeywords}}
\maketitle

\IEEEdisplaynontitleabstractindextext

%
\IEEEpeerreviewmaketitle

\IEEEraisesectionheading{\section{Introduction}\label{sec:introduction}}
Multicast transmission refers to one-to-many transmission from a single source to multiple receivers simultaneously. Today's cellular communication is primarily based on one-to-one communication or what we call unicast transmission. In unicast transmission, the evolved NodeB (eNB)
communicates with each User Equipment (UE) separately using different resources for each one of them. Using multicast transmission, multiple UEs can receive content on the same resources simultaneously.
It can be effectively used for applications like video streaming from popular platforms such as YouTube, Netflix and Amazon Prime, streaming of television (TV) programs, software updates, news
updates and weather forecasts in which the same content is required to be transmitted to a large number of UEs simultaneously. Assigning orthogonal resources to every UE in this scenario is a very inefficient manner of resource allocation. 
Using multicast transmission for such applications can save considerable resources in a cell. Multicast attempts to transmit the common content using as few resources as possible so that the remaining resources can be used to simultaneously 
support other services in the cell. \par

For successfully implementing multicast transmission in Long Term Evolution (LTE), there are two main challenges that need to be addressed. The first is the problem of dividing UEs into different groups. The set of UEs grouped together forms what we will
henceforth refer to as a multicast group. The UEs in a multicast group are treated as a single entity by the eNB and can be served using the same resources. The way UEs are grouped together depends
on the criteria used for grouping. One evident requirement is for all the UEs in a group to require the same content. But, as we shall see later, grouping all the UEs into a single group only based on the criterion that
they need the same content,
may lead to a degraded system performance due to varied channel gains experienced by different UEs. Therefore, the channel gains of the UEs also need to be considered while dividing them into multicast groups.
The second problem to be addressed is that of the resource allocation to the multicast groups. Since we aim at minimizing the resources used for multicast operations, a resource allocation scheme needs to be designed accordingly. \par

Provisions for multicast transmission in LTE have been introduced in Release $9$~\cite{tenth} of Third Generation Partnership Project (3GPP) standards by inclusion of Multimedia Broadcast Multicast Services (MBMS).
In LTE, resources are divided on a time and frequency scale. A radio frame spans over a period of $10$ ms and consists of $10$ sub-frames of $1$ ms each. A sub-frame is made up of smaller units termed as the Physical Resource Blocks (PRBs).
A PRB is the smallest unit of allocation in LTE. 
MBMS allows for point-to-multipoint transmission so that the eNB can transmit to multiple UEs using the same PRBs~\cite{mbmsrel14}.
UEs can subscribe to a MBMS service and are notified when a MBMS session is going to start. They can then receive the relevant content from the eNB. All UEs subscribed to a particular MBMS service are served using common resources. Serving all the 
subscribed UEs using the same PRBs is however, not very efficient. Treating all the multicast UEs as a single group makes the performance of MBMS dependent on the channel gain of the weakest UE in the group. This may significantly bring down the quality of service 
for the other UEs in the group that may be experiencing a much better channel with the eNB. Thus, making groups based solely on the content requirement can lead to a degraded system performance. Let us consider an example to illustrate this. \par

Consider a sub-frame in which $10$ PRBs are available for allocation. Two UEs in the cell, $U_1$ and $U_2$ have subscribed to the same MBMS service. Let the required minimum rate for this service be $10^3$ bits/sub-frame.
When a PRB is allotted to a set of MBMS UEs, the rate at which reliable transmission can take place corresponds to the UE with the least channel gain in the group. Transmitting at a rate greater than this can lead to unsuccessful decoding by the UEs
with the least channel gain. Consider a state where $U_1$ has a good channel in all odd numbered PRBs so that as many as $10^3$ bits can be transmitted in each of them. In the rest
of the PRBs, assume that $U_1$ can only get a maximum of $100$ bits each. Similarly assume that $U_2$ can receive a maximum of $10^3$ bits in each of the even numbered PRBs and $100$ bits in the odd numbered PRBs. 
Now, if these UEs are grouped together for MBMS multicast transmission, data can be transmitted at a rate corresponding to the UE with the least channel gain (and hence the least rate)
in each PRB. So, in this case, only $100$ bits can be transmitted in each PRB and to satisfy the minimum rate, all $10$ PRBs will be used. On the other hand, if $U_1$ and $U_2$ are grouped separately,
they can be alloted PRB $1$ and PRB $2$ respectively and $10^3$ bits can be transmitted in each of these PRBs. Thus, the required rate for both will be satisfied in just $2$ PRBs, $8$ less than the previous scenario. 
This example shows that appropriate grouping of UEs who have subscribed to a given MBMS service is essential for obtaining any benefit whatsoever from multicast operations.  \par

In this paper~\footnote{This work is an extension of the work done by us in~\cite{self}}, we consider multicast transmission with UEs having different channel gains in each PRB.
When a large number of UEs in a cell require the same content, a MBMS session can be initiated to cater to them using multicast transmission.
We shall now briefly discuss the main challenges involved in the grouping and the resource allocation problems for multicast transmission.
The main challenge in grouping UEs based on their channel gains is that, due to fast fading, the channel gains experienced by the UEs keep on changing. 
However, grouping UEs based on their instantaneous Signal-to-Noise Ratio (SNR) in every sub-frame is also not feasible as it can lead to increased control overhead due to frequent changes in grouping. 
Since each multicast group is treated as a separate entity by the eNB, each group is assigned a unique MBMS Radio Network Temporary Identifier (M-RNTI). M-RNTI of a group is used for scrambling it's Downlink Control Indicator (DCI) 
which carries the resource allocation information in LTE~\cite{rnti}. If grouping is changed every sub-frame, a new M-RNTI will have to be assigned and conveyed to the UEs every sub-frame.
Therefore, the grouping policies need to balance these two factors so that the grouping is efficient and need not be changed every sub-frame.
In addition to this, grouping policies need to answer key questions like the number of groups that should be formed or the maximum number of UEs that should be placed in a group. Creating a lesser number of groups means more number of UEs in a single group and may result in a lesser number of PRBs being used by the eNB. However, as the number of UEs in a group increases, the probability that at least one UE is in deep fade also increases. This leads to a degraded system performance. On the other hand, a larger number of groups means more number of entities to be served by the eNB which requires a larger number of resources. Thus, there is a trade-off between the number of multicast groups to be formed and the number of UEs in each group which needs to be balanced by a grouping policy. \par
Once the groups are formed, in each sub-frame, resources have to be allocated to the groups. The aim of the resource allocation problem is to minimize the PRBs allocated to multicast UEs while guaranteeing a certain Quality of Service.
The optimal resource allocation problem can be formulated as a Binary Linear Program (BLP) with the objective of minimizing the number of 
PRBs used to cater to the multicast services constrained to satisfying the rate requirements of all the groups. BLPs are inherently hard to solve and require significant computational power even for small input sizes.
Thus, the optimal grouping and optimal resource allocation problems are non-trivial and we need to design efficient algorithms for their successful implementation.
In this paper, we address both these problems and formulate algorithms to overcome the discussed challenges.
Next, we discuss some of the literature relevant to the problems under consideration.

\subsection{Related Literature}

The literature related to multicast group formation and resource allocation can be broadly classified into four categories namely opportunistic scheduling schemes, joint optimization schemes for unicast and multicast, sub-group formation schemes and schemes for multicasting of Scalable Video Coded (SVC) content. We now describe the work done in each of these areas.
\subsubsection{Opportunistic multicast scheduling:}
Opportunistic scheduling schemes, as the name suggests, are throughput maximizing scheme that schedule UEs with the best channel conditions in a sub-frame.
In~\cite{fifth}, the authors present an optimized version of Opportunistic Multicast Scheduling (OMS) that balances between multicast gain and multi-user diversity. A fraction of UEs with the best channel gains are scheduled in each time slot.
\cite{mbsfn_om} studies the use of opportunistic multicasting for Single Frequency Networks (SFNs). The authors focus on maximizing the spectral efficiency of the SFNs by
opportunistic scheduling of UEs with higher Channel Quality Indicator (CQI) values. \par 
 In~\cite{fdps}, the authors propose a Frequency Domain Packet Scheduler (FDPS) for MBMS that maximizes the minimum rate achievable by
UEs in a PRB. It uses a somewhat pessimistic approach in that it only minimizes the performance loss caused by the worst PRB assignment. 
Moreover, the performance of the proposed policy has only been compared to a blind FDPS policy. The blind FDPS uses a
blind static allocation that doesn't change over time which is not a good benchmark to compare with. \par
In~\cite{genetic}, the authors propose the use of a genetic algorithm for resource allocation in OFDMA multicast followed by power allocation based on the technique proposed in~\cite{genetic_power}. The resource allocation problem in~\cite{genetic} aims to maximize the total throughput subject
to power and fairness constraints. The authors, however, do not subgroup the UEs based on their channel states. All UEs receiving the same content are put into a single multicast group. 
Recently, there has also been some work on multicast transmission for 5G satellite systems. In~\cite{5gsat_1} and~\cite{5gsat_2}, the authors propose solutions for radio resource management and subgrouping for multicast over 5G satellite systems. The optimization problems formulated seek to maximize the Aggregate Data Rate (ADR) of the system. \cite{5gsat_1} and \cite{5gsat_2} also assume a single CQI value corresponding to a multicast UE which means that all PRBs are equivalent for a UE.
Maximizing the ADR is also the objective function of \cite{bargain} that makes use of game theoretic bargaining solutions for grouping and resource allocation of 
multicast UEs. \par
Most of the literature considers only wideband CQI (i.e. a single CQI value for the entire available bandwidth) for grouping and resource allocation in multicast transmission. \cite{subband} is one work that explores the use of subband CQI values in multicast resource allocation. The objective function here is still the maximization of the ADR as in \cite{5gsat_basic}, \cite{5gsat_1} and \cite{5gsat_2}. However, with the consideration of different subband CQI values, a closed form solution for subgroup formation as given in \cite{5gsat_basic} no longer remains feasible. 
While all the papers mentioned in this section seek to maximize the ADR in some way, in this paper, we focus on providing a certain minimum rate to every multicast UE based on the service
it is subscribed to.

\subsubsection{Joint optimization for unicast and multicast:}
This section summarizes the literature that deals with the problems of joint resource allocation to unicast and multicast UEs.
In \cite{joint_2}, \cite{joint_3} and \cite{joint_power} joint delivery of unicast and multicast/broadcast transmission in LTE and OFDMA systems has been addressed. Policies proposed in \cite{joint_3} and
\cite{joint_power} guarantee a certain rate to all the UEs and make use of unicast transmission for serving UEs with the worst CQI values. In \cite{joint_2},
the performance of streaming over MBSFNs and file delivery over eMBMS has been evaluated through simulations. Performance indicators like outage probability, coverage and maximum supportable
MCS have been used to asses the feasibility of various MBMS configurations from the perspective of the service providers. \par
In \cite{mung_chiang}, the authors deal with resource allocation in eMBMS. The authors assume that the video content is simultaneously available through unicast as well as 
eMBMS and the primary problem seeks to jointly optimize over the grouping of UEs and allocation of resources to unicast and eMBMS. The resource allocation scheme 
proposed in the paper allocates resources to groups proportional to the number of UEs in a group.
None of these papers consider the varying channel conditions of UEs over different PRBs while allocating resources. In this paper, however, we account for the fact that the
CQIs of UEs may be different in every PRB of a sub-frame.

\subsubsection{Sub-group formation for multicast:}
In this section, we summarize the literature that primarily deals with dividing multicast UEs that require the same content into multiple sub-groups.
In~\cite{EGB}, the authors deal with the grouping problem for MBMS in High Speed Packet Access (HSPA) networks. They propose a grouping policy that minimizes a `Global Dissatisfaction Index' (GDI). GDI accounts for the difference in 
the maximum data rates achievable by UEs and the rates actually assigned to them. The authors show, using simulations, that their proposed policy performs better in terms of UE satisfaction compared to MBMS transmission without grouping.
In~\cite{first}, the same authors investigate the effect of pedestrian mobility on the performance of the grouping policy proposed in~\cite{EGB}. \par
In~\cite{rrm_ltea}, the authors propose subgrouping and resource allocation for multicast in LTE-A systems. Extensions of LTE multicast subgroup formation are presented for use
in LTE-A systems. They propose a radio resource management scheme that achieves a trade-off between efficiency and fairness. For resource 
allocation, they make use of the bargaining solutions proposed in~\cite{bargain}. It is shown that, due to carrier aggregation
in LTE-A systems, the overall system throughput is significantly increased but the relative performance of the studied algorithms remains the same. Extension of bargaining solutions proposed in \cite{bargain} to multi-carrier systems like LTE-A have been studied in \cite{bargain_ltea}. In \cite{freq_select}, the authors have extended the work from \cite{bargain} to exploit frequency selectivity for improving the spectral efficiency of multicast in LTE.
\cite{hetnet} and \cite{second} deal with the use of multicast in heterogeneous networks and grouping of UEs for MBMS respectively.
Low complexity variations of the Subgroup Merging Scheme (SMS)~\cite{sms} that provide better ADR have been proposed in~\cite{low_subgroup} for improving scalability. \par
The papers mentioned in this section use the entire set of PRBs for catering to the multicast UEs. We, however, aim to satisfy the multicast UEs in the minimum possible
number of PRBs as in a practical scenario, an eNB has to support multiple other services along with multicast sessions.

 \subsubsection{Multicasting of SVC content:}
 In this section, we present a summary of the literature in which various problems related to multicasting of SVC video streams have been studied.
\cite{mood} and \cite{mood_2} deal with resource allocation for MBMS Operation On-Demand (MooD) for video streams. The authors consider Quality of Experience (QoE)
instead of Quality of Service (QoS) as the utility function that is maximized by the resource allocation schemes.
\cite{power} examines power efficient video streaming via MBMS. The authors study the multicast streaming of high quality SVC encoded videos. The UEs are grouped together based on
the content, the quality of content requested and their physical proximity. The algorithms proposed try to minimize the power consumption by sending traffic in discontinuous bursts,
allowing UEs to sleep in between bursts.
In~\cite{topsis}, the subgrouping and resource allocation decisions are based on three criteria, maximizing the throughput, maintaining fairness and minimizing the 
dissatisfaction of groups. The authors make use of Technique for Order of Preference by Similarity to Ideal Solution (TOPSIS)~\cite{topsis_93}, a method for multi-criteria decision 
making in subgrouping and resource allocation algorithms. TOPSIS has also been used in~\cite{evaluation} for comparing the performance of various multicast resource allocation schemes based on their ADR, fairness and spectral efficiency.
Even though SVC provides an interesting new method of video encoding with various benefits, H.264/AVC continues to be the choice of encoding videos over the Internet. Most of the popular streaming platforms like 
Netflix~\cite{netflix} and YouTube use H.264/AVC to encode their videos. \par
To the best of our knowledge, the problem of satisfying the rate requirement of each MBMS UE in each time slot while minimizing the PRB utilization at the eNB has not been addressed in the existing literature.
This is a very important problem because the practical success of the multicast services strongly depends on how well they can co-exist with the other extremely large number of services 
supported by LTE and the next generation 5G networks~\cite{multicast_5g}. While MBMS services are suitable for real time streaming applications, their resource utilization 
has to be such that sufficient resources are available for the non real time applications being simultaneously provided in the cells. Since the resources used 
for providing an MBMS service are essentially disseminating the same content to the MBMS UEs, over provisioning of resources for multicast must be avoided. The multicast 
UEs could simultaneously be using unicast services along with other UEs in the cell which may not be involved in the ongoing MBMS sessions. Minimizing the resources used 
by the MBMS services will ensure that the impact of multicast services on the rest of the operations in the LTE cell is minimized. \par 

In most of the literature, the rate achievable by a UE is assumed to be the same over all PRBs. This means that the achievable rates of UEs are same in every PRB. This assumption greatly simplifies the resource allocation problem as the identities of the PRBs are no longer important. In practice, however, the channel response can be different for different frequency channels resulting in varying channel gains over different PRBs. In this work, we take these variations into account. Thus, the channel
states and hence the CQI values for a group or user vary over different PRBs in a sub-frame.

A large portion of the literature including~\cite{genetic} and~\cite{joint_power} claim that the grouping and resource allocation problems are `hard to solve' or 
`infeasible'. However, none of these papers present any mathematical 
proof of hardness of the grouping or resource allocation problems for multicast transmission. In this paper, for the first time, we present the proof of NP-hardness of both
the optimal grouping as well as the optimal resource allocation problem. Next, we summarize the main contributions of this paper:\\
$\bullet$ We prove that the optimal resource allocation problem that minimizes the number of PRBs utilized while providing a minimum rate to all the multicast groups is a NP-hard problem
and so no polynomial time algorithm can be formulated for determining it's optimal solution unless P = NP. \\ 
$\bullet$ The optimal grouping problem is also shown to be a NP-hard problem. \\ 
$\bullet$ We devise a randomized scheme for estimating the optimal resource allocation. The randomized scheme works iteratively to end up at the optimal
solution with high probability. The output of this scheme acts as a benchmark for the heuristic schemes for resource allocation. \\ 
$\bullet$ We propose two heuristic schemes for resource allocation to multicast groups, a greedy scheme and a LP-relaxation based scheme. \\ 
$\bullet$ We also propose two heuristic schemes for multicast group formation, a fixed size grouping scheme and a CQI based grouping scheme. \par 


The rest of this paper is organized as follows. In Section~\ref{sec:sys_model}, we discuss the problem formulation and the system model.
The Simulated Annealing (SA) based randomized scheme and related results are presented in Section~\ref{sec:RAS}. In Section~\ref{sec:RASh}, we present the proposed heuristic schemes for 
resource allocation. The proposed heuristic schemes for grouping are discussed in Section~\ref{sec:grouping}. 
We present the simulation results in Section~\ref{sec:simulations} and conclude in Section~\ref{sec:conclusion}.
In the interest of preserving the flow of the paper, proofs of NP-hardness of the optimal grouping and optimal resource allocation problems 
are presented in Sections~\ref{NPC1} and~\ref{NPC2} respectively. The notations used in the paper have been summarized in Table~\ref{notations} for easy reference of the reader.

\section{Problem Formulation} \label{sec:sys_model}
We consider a single LTE cell with $M$ multicast UEs. All the UEs have subscribed to the same MBMS service and require to be served at a minimum rate of $R$ bits/sec. The required rate can 
be provided to each UE by allotting one or more PRBs in each LTE sub-frame. We denote the number of PRBs in a sub-frame by $N$. Let $[n] = \{1,\ldots,n\}$ and let $|A|$ denote the cardinality of a set $A$. Thus, $[M]$ and $[N]$ denote the set of all
multicast UEs and the set of PRBs in a sub-frame, respectively. We assume that the channels between eNB and the multicast UEs are location and
time varying. Thus, each UE has different channel gains in different PRBs and also across different sub-frames. We assume block fading channel model, and hence
the channel gain of a UE is assumed to remain the same across a sub-frame. Though we are not considering mobility explicitly, our approach can be extended to cases where
UE positions evolve at a much slower time scale than the sub-frame duration. Let $h_{iu}[t]$ denote the channel gain for UE $u$ on $i^\th$ PRB in sub-frame $t$.
$h_{iu}[t] = \overline{h}_{iu} + H_{iu}[t]$, is made up of $2$ components. $\overline{h}_{iu}$ denotes the average channel gain which accounts for path loss and shadowing and is invariant across sub-frames.
$H_{iu}[t]$ is the fast-fading component that varies across sub-frames. $H_{iu}[t]$'s are independent and identically distributed (i.i.d) exponential random variables.

We assume that the eNB has full Channel State Information (CSI) of all the UEs. Corresponding to the channel gain, the eNB assigns the maximum supportable rate, $r_{iu}[t]$
bits/sec for UE $u$ on $i^\th$ PRB in sub-frame $t$. Note that $r_{iu}[t]$ is determined by the Modulation and Coding Scheme (MCS) used, and thus can take finitely many values (15 as per current standards for LTE~\cite{seventh}). 
Next, we discuss grouping.\par

Since all multicast UEs want the same content in each sub-frame, the UEs can be grouped together and
served on common PRBs. A grouping strategy, $\Delta$ is defined as follows:

\begin{mydef}
A grouping strategy $\Delta$, defines a partition
$\{G_1^\Delta,\ldots,G_L^\Delta\}$ of $[M]$, where $G_i^\Delta \subseteq [M]$ is referred to as the $i^\th$ group.
\end{mydef}

Note that $L \le M$ and when $L=M$, we have the unicast case. Henceforth, unicast is not dealt with separately. Throughout this paper, we assume that the groups once defined at the beginning of a MBMS session cannot be changed in that session.
This is done to avoid excessive control overhead that may result due to rapid changes in grouping.
One can relax the assumption and allow for grouping to be potentially changed every $K$ sub-frames, where $K$ is large. This will allow the scheme to adapt in case of mobile networks.
The minimum supportable rate for a group $G_j$ on $i^\th$ PRB in sub-frame $t$ ($r_{ij}^\Delta[t]$) is equal to the minimum of the rates achievable by it's constituent members, i.e., $r_{ij}^\Delta[t] = \min_{u\in G_i^\Delta}\{r_{iu}[t]\}$.
This is to ensure that the content received by the group can be successfully decoded by all the members. If we transmit at rates more than this, the weakest UE in the group may not be able to decode the received content successfully.
Once the $r_{ij}^\Delta[t]$'s are obtained, we need to decide how resources will be allotted to each group so that the total number of PRBs used is
minimized subject to giving each group at least the minimum required rate $R$. This is a resource allocation problem. The formal definition of a resource allocation policy is stated below.
\begin{mydef}
For a given grouping $\Delta$ a resource allocation policy, $\Gamma$ defines an assignment of PRBs to the $L$ multicast groups, \{$\overline{V}_{1\Gamma}^\Delta, \ldots , \overline{V}_{L\Gamma}^{\Delta}$\}, where, $\overline{V}_{i\Gamma}^\Delta$ is the
set of PRBs assigned to group $i$ by resource allocation policy $\Gamma$ under grouping $\Delta$. The allocation $\Gamma$ should be such that $\bigcap_{i=1}^L\overline{V}_{i\Gamma}^\Delta = \phi$ and $\bigcup_{i=1}^L\overline{V}_{i\Gamma}^\Delta \subseteq [N]$.
\end{mydef}
The resource allocation policy $\Gamma$ is said to be feasible if $\sum_{j \in \overline{V}_{i\Gamma}}^{\Delta}r_{ij}^{\Delta}[t] \geq R$ for every $i \in [L]$.
The other parameter used by us to characterize a resource allocation policy is the number of PRBs left unused after resources have been allocated in a sub-frame $t$, $S_{\Gamma}^\Delta[t]$.
So, $S_\Gamma^\Delta[t] = N - |\bigcup_{i=1}^L\overline{V}_{i\Gamma}^\Delta|$.
We shall now formally state our resource allocation and grouping problems.

\subsection{Problem 1: Optimal Resource Allocation \Bstar}
Consider a fixed grouping policy $\Delta$, and define indicators in sub-frame $t$ as follows:
\begin{eqnarray*}
x_{ij}[t] = \begin{cases}
	{1}, & \text{if PRB $j$ is assigned to group $i$}\\
	{0}, &  \text{otherwise}.
\end{cases}
\end{eqnarray*}
The optimal resource allocation can be obtained as a solution to the following BLP for every $t$:
\begin{eqnarray}
(\bf{B_\Delta^\star}):\ \ \label{eq:objective}\min\sum_{j\in[N]}\sum_{i\in [L]}x_{ij}[t],  \nonumber\\
\label{eq:rate_constr}\text{subject to:} \ \ \sum_{j\in[N]} x_{ij}[t] r_{ij}^\Delta[t] &\geq& R, \hspace{5mm} 
\forall \ i \in [L],\\
\label{eq:matching_constr}\sum_{i\in[L]}x_{ij}[t] &\leq& 1, \hspace{5mm} \forall \ j\in[N].
\end{eqnarray}
The objective function of \Bstar seeks to minimize the number of utilized PRBs in sub-frame $t$.
Constraint (\ref{eq:rate_constr}) guarantees that the total rate given to each group is greater than or equal to the desired minimum rate $R$ and (\ref{eq:matching_constr}) ensures that each PRB is given to at most one group.

Note that \Bstar \ gives the optimal resource allocation for any grouping $\Delta$. Next, we establish the hardness of \Bstar.

\begin{lemma} \label{NPC1_lemma}
 Optimization \Bstar \ is NP-hard.
\end{lemma}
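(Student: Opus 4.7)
I would establish NP-hardness of \Bstar by exhibiting a polynomial-time reduction from PARTITION, a classical NP-complete problem. Recall that an instance of PARTITION is a multiset of positive integers $a_1, \ldots, a_n$ with $\sum_{j=1}^{n} a_j = 2S$, and one must decide whether there exists $A \subseteq [n]$ with $\sum_{j \in A} a_j = S$. The underlying intuition is that the multi-group assignment constraint in \Bstar already carries the flavour of a two-bin packing problem once the PRB rates are made symmetric across groups, so such an instance should embed naturally into the resource allocation formulation.

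Given a PARTITION instance, I would construct the following instance of \Bstar: set $L = 2$ groups, $N = n$ PRBs, let the per-PRB rates be $r_{1j}^{\Delta} = r_{2j}^{\Delta} = a_j$ for every $j \in [n]$, and take the rate threshold to be $R = S$. This construction is clearly computable in time polynomial in the size of the PARTITION instance, so it is a legitimate reduction.

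The central step is to argue that PARTITION is a yes-instance if and only if the constructed \Bstar instance admits a feasible allocation. In the forward direction, a balanced partition $\{A, [n] \setminus A\}$ of $[n]$ induces an allocation that assigns the PRBs in $A$ to group $1$ and the PRBs in $[n] \setminus A$ to group $2$; both rate constraints are met with equality and all $n$ PRBs are used. Conversely, any feasible allocation picks disjoint PRB sets $V_1, V_2 \subseteq [n]$ with $\sum_{j \in V_i} a_j \geq S$ for $i \in \{1,2\}$; since $\sum_{j=1}^n a_j = 2S$, the disjointness of $V_1$ and $V_2$ forces both sums to equal $S$, so $V_1$ provides a valid PARTITION witness. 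Consequently, deciding feasibility of the constructed \Bstar instance is already as hard as PARTITION, and since any polynomial-time algorithm solving the optimization \Bstar would, in particular, decide whether a feasible allocation exists, the optimization problem is NP-hard.

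The main obstacle in this argument is ensuring the reverse direction holds tightly: one must justify that unassigned PRBs cannot be used to slacken the rate inequalities into a strict surplus for either group. This is exactly where the equality $\sum_j a_j = 2S$ combined with disjointness of $V_1$ and $V_2$ is crucial, as it leaves no ``room'' for either group to exceed $S$ while still meeting the requirement. Once this tightness is verified, the reduction is complete and polynomial reducibility from an NP-complete problem yields the claim.
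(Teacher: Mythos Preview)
Your reduction is correct and the argument is complete: the forward direction is immediate, and in the reverse direction the disjointness constraint~(\ref{eq:matching_constr}) together with $\sum_j a_j = 2S$ indeed forces both group sums to hit $S$ exactly, so feasibility of \Bstar\ is equivalent to a yes-instance of PARTITION. Since any optimizer for \Bstar\ decides feasibility as a by-product, NP-hardness follows.

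Your route, however, differs from the paper's. The paper reduces from \emph{3-partition} rather than PARTITION: given $3m$ integers summing to $mB$ with each in $(B/4,B/2)$, it builds an instance with $L=m$ groups, $N=3m$ PRBs, symmetric rates $r_{ij}=\rho_j$, and threshold $R=B$, and then argues that a feasible allocation corresponds exactly to a valid 3-partition. The trade-off is as follows. Your argument is more elementary and, notably, shows that \Bstar\ is already hard with just $L=2$ groups, which is a sharper structural statement. The paper's choice of 3-partition, on the other hand, buys \emph{strong} NP-hardness: because 3-partition remains NP-complete when the input integers are bounded by a polynomial in $m$, the paper's reduction establishes hardness even when the rates $r_{ij}$ are polynomially bounded in $N$ and $L$. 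Your PARTITION reduction only certifies weak NP-hardness, since PARTITION with polynomially bounded integers is solvable in polynomial time by dynamic programming; the hardness in your construction therefore relies on allowing rates of magnitude exponential in $N$. For the lemma as stated (plain NP-hardness), your proof suffices; if one cares about hardness under bounded rate values, the 3-partition route is needed.
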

\begin{proof}
 Refer to Section~\ref{NPC1} for the detailed proof.
\end{proof}

\subsection{Problem 2: Optimal Grouping \Dstar}
Recall that $S_\Gamma^\Delta[t]$ denotes the number of PRBs left unutilized under grouping policy $\Delta$ in sub-frame $t$ using resource allocation scheme $\Gamma$. Note that these PRBs can be used for other UEs in the system.
Define,
\begin{align} \label{avg_metric}
\overline{S}_\Gamma^\Delta = \lim\inf_{\mkern-25mu T\to\infty} \frac{1}{T} \sum_{t=1}^T S^\Delta_\Gamma[t].
\end{align}
Thus, $\overline{S}^\Delta_\Gamma$ is the average number of unutilized PRBs per sub-frame under grouping policy $\Delta$ and resource allocation policy $\Gamma$. 
The optimal grouping problem can be defined for any given resource allocation policy $\Gamma$. 
The definition of the optimal grouping problem is stated below: \par
(\Dstar) : Determine the optimal grouping policy $\Delta^\star$ such that $\overline{S}_\Gamma^{\Delta^\star} \geq \overline{S}_\Gamma^\Delta$ for every $\Delta$. \par

We note that determining $\overline{S}_\Gamma^\Delta$ for a general grouping $\Delta$ and resource allocation $\Gamma$ itself is a very hard, if not an impossible problem. The value of $\overline{S}_\Gamma^\Delta$ depends on the combined channel states of all 
the UEs in various sub-frames. Hence, to examine the hardness of \Dstar, we consider a scenario where, for given $\Gamma$ and proposed $\Delta$, there exists a Genie that provides the value of $\overline{S}_\Gamma^\Delta$. We assume that the mapping from
$\Delta$ to $\overline{S}_\Gamma^\Delta$ can be an arbitrary positive valued function. We show in the following result that there exist mapping functions for which the problem of determining $\Delta^\star$ for given $\Gamma$ is NP-hard.

\begin{lemma} \label{NPC2_lemma}
 For a fixed $\Gamma$, the problem of determining $\Delta^\star$ for an arbitrary mapping function from the set of all groupings to the average number of unused PRBs is NP-hard.
\end{lemma}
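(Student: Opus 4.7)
The plan is to exploit the fact that the Genie may expose any positive-valued mapping $\Delta \mapsto \overline{S}_\Gamma^\Delta$. Consequently, it suffices to exhibit one family of such mappings---derived from an NP-hard problem whose witnesses are partitions of $[M]$---for which locating $\Delta^\star$ is NP-hard. I would reduce from graph $3$-colorability, whose yes-instances are precisely partitions of the vertex set into at most three independent sets.

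Given an instance $G = (V, E)$ of $3$-colorability with $V = [M]$, I identify the multicast UEs with the vertices of $G$ and let the Genie declare
\begin{equation*}
\overline{S}_\Gamma^\Delta = \begin{cases}
2, & \text{if $\Delta$ has at most three parts and every part is an independent set of $G$,}\\
1, & \text{otherwise.}
\end{cases}
\end{equation*}
This is a positive-valued function and, for any proposed $\Delta$, its value can be computed in $O(M + |E|)$ time by scanning the edges of $G$ against the group labels, so the reduction itself runs in polynomial time.

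Correctness is immediate: a proper $3$-coloring of $G$ yields a partition of $[M]$ into at most three independent sets, so $\max_{\Delta} \overline{S}_\Gamma^\Delta = 2$ whenever $G$ is $3$-colorable; conversely any $\Delta$ attaining the value $2$ induces a proper $3$-coloring by labelling each part with a distinct colour. Hence a polynomial-time algorithm for \Dstar would yield a polynomial-time algorithm for $3$-colorability, establishing the NP-hardness of \Dstar.

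The main obstacle I anticipate is conceptual rather than technical. One has to justify working under the arbitrary-mapping hypothesis in the first place. The justification rests on the observation that $\overline{S}_\Gamma^\Delta$ is determined by an intricate functional of the joint fading statistics across all UEs, PRBs and sub-frames, so any algorithm that claims to solve \Dstar in full generality must be prepared to cope with essentially unrestricted positive objective functions on the partition lattice of $[M]$---and the reduction above shows that this lattice already hosts NP-hard optimization problems.
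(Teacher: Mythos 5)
Your proof is correct, and it follows the same architectural template as the paper's: both arguments exploit the arbitrary-Genie hypothesis by defining $f_S$ to take a strictly larger value exactly on those partitions of $[M]$ that encode yes-certificates of a known NP-hard problem, so that an optimizer for \Dstar{} would decide that problem. The difference is the source problem and the encoding. The paper restricts attention to two-group partitions (its problem \Estar), identifies UE $i$'s group membership with the truth value of a Boolean variable $x_i$, sets $f_S = 3 + B(x)$, and reduces from SAT; you reduce from graph $3$-colorability with $f_S \in \{1,2\}$ distinguishing partitions into at most three independent sets. Your choice is arguably the more natural fit, since $3$-coloring certificates literally \emph{are} partitions of the ground set, whereas the paper must translate unordered two-part partitions into ordered truth assignments (deciding which block plays the role of TRUE, and handling the empty block when all variables get the same value) --- details your reduction sidesteps entirely. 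What the paper's route buys in exchange is the explicit intermediate problem \Estar, which lets it state the slightly sharper conclusion that hardness persists even when only two groups are permitted; your reduction as written needs three parts, though it could be pushed down to two by instead reducing from, say, a two-block partition problem. Both reductions are polynomial, both Genie functions are positive-valued and polynomial-time evaluable, and your closing remark correctly identifies that the only contestable point is the arbitrary-mapping hypothesis itself, which is exactly the modeling assumption the lemma statement makes explicit.
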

\begin{proof}
 Refer to Section~\ref{NPC2} for the detailed proof.
\end{proof}


 Since we have proved that both optimal grouping and optimal resource allocation problems are NP-hard, no polynomial time algorithms exist for determining their optimal 
 solutions unless P = NP. We can, however, use some intelligent heuristic schemes to obtain near optimal solutions.
 In the following section, we formulate an iterative randomized scheme for estimating the optimal resource allocation. 
\begin{center}
\begin{table}
\captionof{table}{Table of notations} \label{notations} 
\begin{center}
\begin{tabular}{ | m{2 cm} | m{6 cm}|} 
 \hline
 \textbf{Notation} & \textbf{Meaning}\\
 \hline
 \hline
 $M$ & Number of multicast UEs \\
 \hline
 $L$ & Number of multicast groups \\
 \hline
 $N$ & Number of PRBs in a sub-frame \\
 \hline
 $[n]$ & $\{1,2, \ldots ,n\}$ \\
 \hline
 $|A|$ & Cardinality of set $A$ \\
 \hline
 ${\cal N}$ & Set of available PRBs in a sub-frame \\
 \hline
 ${\cal L}$ & Set of multicast groups \\
 \hline
 $h_{iu}[t]$ & Channel gain of UE $u$ on $i^{th}$ PRB in sub-frame $t$ \\
 \hline
 $r_{iu}[t]$ & Maximum rate supportable by UE $u$ on $i^{th}$ PRB in sub-frame $t$ \\
 \hline
 $\Delta$ & Grouping strategy \\
 \hline
 $\Delta^\star$ & The optimal grouping policy \\
 \hline
 $G_i^\Delta$ & $i^{th}$ group under policy $\Delta$ \\
 \hline
 $\Gamma$ & Resource allocation policy \\
 \hline
 $\overline{V}_{i\Gamma}^\Delta$ & Set of PRBs assigned to $G_i^\Delta$ under policy $\Gamma$\\
 \hline
 $R$ & Rate requirement of the multicast UEs \\
 \hline
 $S_\Gamma^\Delta[t]$ & Number of PRBs left unutilized under $\Delta$ in sub-frame $t$ using $\Gamma$ \\
 \hline
 $x_{ij}[t]$ & Indicator random variable that equals $1$ when PRB $j$ is assigned to group $i$ in sub-frame $t$ \\
\hline 
 \end{tabular} 
 \end{center}
 \end{table}
\end{center}

\section{Randomized Algorithm for Optimal Resource Allocation} \label{sec:RAS}
As stated in the previous section, no polynomial time algorithm exists for determining the optimal resource allocation. We can, however, estimate the optimal solution using randomized algorithms that iteratively explore the possible solutions
to finally converge to the optimum. The proposed randomized scheme serves dual purpose, $1$) it provides near optimal solution in much lesser computational power than that required to solve the BLP \Bstar and, $2$) it's output can be used as a benchmark for evaluation of the 
heuristic schemes which we propose in Section~\ref{sec:RASh}. Next, we describe a randomized algorithm for resource allocation. \par

The allocation of resources in LTE is done in every sub-frame. So, for brevity, we fix a sub-frame $t$ and omit it from notations in this section. Grouping strategy $\Delta$ impacts resource allocation via $r_{ij}^\Delta$, which is the rate 
achievable by group $i$ in PRB $j$. Here, we deal with resource allocation for any given $\Delta$. So, we omit $\Delta$ from the notations as well for better readability. 

The Randomized Scheme (RS) used here is based on SA, a well known Markov Chain Monte Carlo (MCMC) technique~\cite{ross}.
SA is a randomized algorithm used for obtaining the global optimum of a function. In SA, we use a Markov chain on the states of the problem under consideration and transition among the states to ultimately end up at the global optimum.
In our case, states correspond to all possible resource allocations to the groups. Therefore, every state, $s_d$ of the Discrete Time Markov Chain (DTMC) is a possible distribution of PRBs, \{$\overline{V}_{0d}, \overline{V}_{1d}. \ldots ,\overline{V}_{Ld}$\} 
where $\overline{V}_{id}$ is the set of PRBs assigned to group $G_i$, $i \in \{0,1, \ldots,L\}$ in state $s_d$. $G_0$ here is a dummy group that is assigned all the unused PRBs. Thus, the state space, $\chi$ corresponds to all possible PRB allocations to groups.
Let $\ell_{di}$ denote the total rate achieved by the $i^\th$ group in allocation $s_d$. Thus, $\ell_{di} = \sum_{j \in \overline{V}_{id}} r_{ij}$. Moreover, let $q_d$ denote $|\{i : \ell_{di} \geq R\}|$, i.e. $q_d$ is the number of satisfied groups in
allocation $s_d$. Each state has an associated reward that defines how good or bad the state is. For our DTMC, we define the reward function, $E$ from $\chi$ to the set of real numbers as follows:

\begin{equation*}
	E(s_d) = |\overline{V}_{0d}| - \sum_{i=1}^L \left[ R - \ell_{di} \right]^+ + q_d,
    \end{equation*}
where $[y]^+ = \max \{y,0\}$ and $|\overline{V}_{0d}|$ represents the number of unused PRBs in state $s_d$. The reward function is a monotonically 
increasing function of the number of satisfied groups and the number of unused PRBs. It also decreases proportionally with the difference between the required and achieved rates of the groups. Thus, intuitively, maximizing $E$ will maximize the number of 
unused PRBs while satisfying all the groups. We prove this formally in the next result.

\begin{lemma} \label{reward}
 Let \Bstar \ has a feasible solution and $s_{d^\star} \in \arg\max_{s_d} E(s_d)$. Define $x_{ij}^\star = 1$ if $j \in \overline{V}_{id^\star}$ and $0$ otherwise. Then, $\{x_{ij}^\star\}_{i,j}$ is the optimal solution of the BLP \Bstar.
\end{lemma}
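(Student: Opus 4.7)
The plan is to show that the reward function $E$ is structured so that its maximizers correspond exactly to the optimal BLP solutions of \Bstar. I would carry out the proof in three stages: evaluating $E$ at a BLP optimum, establishing a uniform upper bound, and then combining the two to characterize the argmax.

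First, I would compute $E$ at an arbitrary optimum $s^\circ$ of \Bstar, which exists by the feasibility hypothesis. If $s^\circ$ uses $k^\circ$ PRBs, then $|\overline{V}_{0,s^\circ}| = N - k^\circ$. Since $s^\circ$ satisfies every rate constraint (\ref{eq:rate_constr}), all the terms $[R - \ell_{s^\circ,i}]^+$ vanish and $q_{s^\circ} = L$, giving $E(s^\circ) = (N - k^\circ) + L$. Second, for any state $s_d$ the penalty sum is nonnegative and $q_d \leq L$, yielding the structural upper bound $E(s_d) \leq |\overline{V}_{0d}| + L$, with equality if and only if $s_d$ is feasible for \Bstar.

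The crucial step is to show that a maximizer $s_{d^\star}$ of $E$ must be BLP-feasible. I would argue by contradiction: assume some group $i_0$ has $\ell_{d^\star,i_0} < R$ and use the feasible reference $s^\circ$ to build a sequence of single-PRB reassignments from $s_{d^\star}$ toward an all-satisfied state, with $E$ nondecreasing along each move and strictly increasing in aggregate. Concretely, I would examine the PRBs that $s^\circ$ gives to $i_0$ but that $s_{d^\star}$ holds elsewhere (either in the dummy $G_0$ or in some satisfied donor group), and transfer them one at a time into $i_0$. Each transfer changes $E$ by $-1$ from the lost unused PRB, plus the drop in $i_0$'s penalty (up to $r_{i_0 j}$), plus a potential $+1$ in $q$ once $i_0$ crosses the threshold, minus any penalty newly incurred by the donor. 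Because $s^\circ$ satisfies all groups simultaneously, the cumulative effect of following the symmetric-difference path telescopes to at least $[R - \ell_{d^\star,i_0}]^+ > 0$, contradicting the maximality of $s_{d^\star}$.

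Once $s_{d^\star}$ is known to be BLP-feasible, the structural bound is tight there, so $E(s_{d^\star}) = |\overline{V}_{0,d^\star}| + L$. Combining this with $E(s_{d^\star}) \geq E(s^\circ) = (N - k^\circ) + L$ gives $|\overline{V}_{0,d^\star}| \geq N - k^\circ$, i.e., $s_{d^\star}$ uses at most $k^\circ$ PRBs. Since $k^\circ$ is the BLP minimum and $s_{d^\star}$ is feasible, $s_{d^\star}$ attains this minimum, so the indicators $\{x_{ij}^\star\}$ form an optimal solution of \Bstar. The main obstacle is the contradiction step: the single-PRB accounting requires a careful ordering so that no donor group's incremental penalty outweighs the corresponding gain at $i_0$, which forces the transfer schedule to exploit the satisfaction slack built into $s^\circ$'s donors and to process $i_0$'s receipts in an order that crosses its threshold as early as possible.
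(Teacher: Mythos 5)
Your first, second, and fourth stages are correct, and in fact give a cleaner bookkeeping than the paper's own optimality step: $E(s^\circ)=(N-k^\circ)+L$ at a BLP optimum, the bound $E(s_d)\le|\overline{V}_{0d}|+L$ with equality exactly on feasible states, and the final comparison forcing $s_{d^\star}$ to use at most $k^\circ$ PRBs are all sound. The genuine gap is the third stage, the claim that a maximizer of $E$ must be feasible. Your symmetric-difference transfer into $i_0$ does not telescope to $[R-\ell_{d^\star i_0}]^+$. A single transfer that pulls PRB $j$ out of a donor group $i$ changes $E$ by a gain at $i_0$ of at most $r_{i_0 j}$ (plus possibly $+1$ in $q$) minus a new penalty at $i$ of up to $r_{ij}$ (plus possibly $-1$ in $q$); this is strictly negative when, say, $r_{ij}=100$, $r_{i_0j}=50$, the donor sits exactly at threshold and drops below it, while $i_0$ does not cross it. No ordering of $i_0$'s receipts repairs this, because the partial path leaves the donors broken; and if you also repair the donors you are walking the full path to $s^\circ$, whose aggregate change is exactly $E(s^\circ)-E(s_{d^\star})$ --- the very quantity whose sign is in question --- so nothing has been established.

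More fundamentally, feasibility of the maximizer is not a purely combinatorial fact: it depends on the magnitude of the rates relative to the unit cost of a PRB inside $E$. If a single PRB could contribute less than one unit of rate, the all-unused state beats every feasible state (take $L=1$, $N=10$, $R=3$, $r_{1j}=1/2$ for all $j$: the empty allocation has $E=10-3=7$ while the feasible optimum uses six PRBs and has $E=4+1=5$), so the statement is false without the assumption the paper invokes, namely $r_m>1$ where $r_m$ is the minimum per-PRB rate. Your argument never uses this, so it cannot close. The paper instead argues in two cases: if some PRB is unused, adding it to an unsatisfied group raises $E$ by $R-\ell_{d^\star i'}>0$ if that group becomes satisfied and by at least $r_m-1>0$ otherwise; if every PRB is used, then $E(s_{d^\star})\le q_{d^\star}\le L-1<L\le E$ of any feasible state. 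Replacing your path argument with an argument of this form (and keeping your stages one, two and four verbatim) yields a complete proof.
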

\begin{proof}
 The detailed proof is given in Section~\ref{Eproof}.
\end{proof}
Thus, determining a state that maximizes the reward function is equivalent to determining the optimal solution of \Bstar.
 Note that the proposed approach uses a DTMC on $\chi$ where $|\chi| = (L+1)^N$. Recall that $L$ denotes the number of groups and $N$ denotes the number of PRBs available in a sub-frame. Hence, the Transition Probability Matrix (TPM) corresponding to the DTMC
 will have dimensions exponential in $N$. So, for guaranteeing computational feasibility of the proposed approach, one must ensure that the TPM need not be stored, rather, given the current state, transition to the next state can be determined in time polynomial in
 system parameters. Next, we elaborate how such a DTMC can be constructed.

 \subsection{DTMC Construction}
 Let $E^\star$ denote the maximum of the reward function, $E(.)$, i.e. $E^\star=\max_{s_d}E(s_d)$. We construct a DTMC $\{X_n^T\}_{n \geq 1}$ on $\chi$ such that $\mathbb{P}(E(X_n) = E^\star)$ tends to $1$ as $n$ tends to $\infty$.
 This implies that, if we simulate the DTMC for a large enough time, say $\tau$, then the probability that the state of the DTMC at time $\tau$ yields the optimal resource allocation is very close to one. Thus, there is a high chance of determining the optimal resource allocation
 using this randomized algorithm. \par
 
 Constructing a DTMC requires defining it's states, neighboring states, and the TPM. We have defined the states of the DTMC above. Now, we define the remaining terms.
 
 \subsubsection{Neighboring States} \label{neighboring_states}
 Consider any state $s_d \in \chi$. A state $s_{d'}$ is a neighbor of $s_d$ if it can be obtained from $s_d$ using one of the following actions:
   
	  $\bullet$ \textit{Swap} ($A_1$): Swap takes two PRBs $j_1$ and $j_2$ from groups $i_1$ and $i_2$ respectively and assigns $j_1$ to $i_2$ and $j_2$ to $i_1$.
	  Only allocation to the groups $i_1$ and $i_2$ is changed through the swapping action. Mathematically, $s_{d'}$ is obtained from $s_d$ using swap if:
	  \begin{enumerate}
	   \item $j_1 \in \overline{V}_{i_1d}$ and $j_2 \in \overline{V}_{i_2d}$,
	   \item $\overline{V}_{id'} = \overline{V}_{id}$ for all $i \neq i_1,i_2$ and
	   \item $\overline{V}_{i_1d'} = (\overline{V}_{i_1d} \setminus \{j_1\}) \cup \{j_2\}$, $\overline{V}_{i_2d'} = (\overline{V}_{i_2d} \setminus \{j_2\}) \cup \{j_1\}$.
	  \end{enumerate}

	  $\bullet$ \textit{Drop} ($A_2$): The drop action takes a PRB $j_1$ from a group $i_1$ ($i_1 \neq 0$) and assigns it to group $G_0$.
	  Here, only allocation of groups $i_1$ and $0$ is changed by dropping the PRB $j_1$. Mathematically, $s_{d'}$ is obtained from $s_d$ using drop if:
	  \begin{enumerate}
	   \item $j_1 \in \overline{V}_{i_1d}$,
	   \item $\overline{V}_{id'} = \overline{V}_{id}$ for all $i \neq i_1,0$ and
	   \item $\overline{V}_{i_1d'} = \overline{V}_{i_1d} \setminus \{j_1\}$, $\overline{V}_{0d'} = \overline{V}_{0d} \cup \{j_1\}$.
	  \end{enumerate}
	  
	  $\bullet$ \textit{Add} ($A_3$): The add action takes a PRB $j_1$ from $\overline{V}_{0d}$ and assigns it to a group $i_1 \neq 0$.  
	   Here, only allocation of groups $i_1$ and $0$ is changed by assigning the PRB $j_1$ to group $i_1$. Mathematically, $s_{d'}$ is obtained from $s_d$ using add if:
	  \begin{enumerate}
	   \item $j_1 \in \overline{V}_{0d}$,
	   \item $\overline{V}_{id'} = \overline{V}_{id}$ for all $i \neq i_1,0$ and
	   \item $\overline{V}_{i_1d'} = \overline{V}_{i_1d} \cup \{j_1\}$, $\overline{V}_{0d'} = \overline{V}_{0d} \setminus \{j_1\}$.
	  \end{enumerate}
   Note that the neighboring relation defined here is symmetric in nature. This is proved in the following result.
   
   \begin{lemma} \label{lemma:neighbors}
    The neighboring relation of the DTMC $\{X_n^T\}_{n \geq 1}$ is symmetric. Moreover, if transition from $s_d$ to $s_{d'}$ occurs due to a swap action, then transition from $s_{d'}$ to $s_d$ can also take place using a swap action only. 
    Similarly, if transition to $s_d$ from $s_{d'}$ occurs due to add (drop, respectively), the transition from $s_{d'}$ to $s_d$ can only result from drop (add, respectively).
   \end{lemma}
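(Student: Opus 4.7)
My approach is organized around the observation that each of the three actions has a distinct, easily computable effect on $|\overline{V}_{0d}|$, the size of the dummy group. A swap between two non-dummy groups $i_1, i_2$ (both different from $0$ by definition of $A_1$) leaves every group size unchanged and in particular preserves $|\overline{V}_{0d}|$; a drop increases $|\overline{V}_{0d}|$ by exactly one; an add decreases it by exactly one. This ``signature'' will pin down the type of any reverse transition.

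Given this, I would proceed in two steps. First, for each action type I would exhibit an explicit reverse transition of the matching type and verify it fulfils the three defining conditions given in Section~\ref{neighboring_states}. If $s_d \to s_{d'}$ is a swap with $j_1 \in \overline{V}_{i_1d}$, $j_2 \in \overline{V}_{i_2d}$, then in $s_{d'}$ we have $j_2 \in \overline{V}_{i_1d'}$ and $j_1 \in \overline{V}_{i_2d'}$, so applying swap from $s_{d'}$ on the same pair of PRBs (now with $j_2$ playing the role of ``$j_1$'' and $j_1$ playing the role of ``$j_2$'') restores $s_d$; the allocation of every group other than $i_1, i_2$ is untouched, as required. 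If $s_d \to s_{d'}$ is a drop of $j_1$ from $i_1 \neq 0$, then $j_1 \in \overline{V}_{0d'}$ in $s_{d'}$, and applying add from $s_{d'}$ with this $j_1$ and target group $i_1$ yields $s_d$ by direct computation. The reverse of add by drop is entirely analogous.

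Second, I would rule out reverses of any other type using the signature above. A swap preserves $|\overline{V}_{0d}|$, whereas add and drop shift it by $\pm 1$, so a pair $(s_d, s_{d'})$ with $|\overline{V}_{0d}| = |\overline{V}_{0d'}|$ cannot be linked by add or drop; a pair with $|\overline{V}_{0d'}| = |\overline{V}_{0d}| - 1$ (the effect of an add from $s_d$) can only be traversed in the reverse direction by an action that increases $|\overline{V}_{0}|$ by one, which is drop; symmetrically for drop. Together with the explicit constructions above, this gives both the symmetry of the neighbouring relation on $\chi$ and the uniqueness claim about which action type realizes the reverse.

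I do not anticipate a real obstacle; the entire argument is bookkeeping with the three conditions defining $A_1$, $A_2$, $A_3$. The only point needing mild care is checking that the reverse swap still satisfies condition~(1) of $A_1$ (the two PRBs must actually reside in the groups being swapped), which is immediate because the definition of $A_1$ is symmetric in the ordered pairs $(i_1, j_1)$ and $(i_2, j_2)$, and the swap simply interchanges which PRB sits in which of the two groups.
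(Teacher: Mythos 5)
Your proposal is correct and its first half — constructing, for each of the three action types, an explicit reverse transition and checking it against the three defining conditions — is exactly the paper's own proof. Where you go beyond the paper is the $|\overline{V}_{0}|$-signature argument: observing that swap preserves $|\overline{V}_{0d}|$ while add and drop shift it by $-1$ and $+1$ respectively, and using this to \emph{exclude} reverses of any other type. The paper merely constructs the matching reverse and asserts that the return transition ``can be obtained using a swap (resp.\ drop, add) action only,'' without arguing why no other action type could realize the same pair of states; your signature observation is what actually proves those exclusivity claims, so it is a worthwhile strengthening rather than mere redundancy. One small inaccuracy: the definition of $A_1$ does not restrict $i_1,i_2$ to be non-dummy groups (the PRBs for a swap are drawn from all of $[N]$, including $\overline{V}_{0d}$), so your parenthetical ``both different from $0$ by definition of $A_1$'' is a misreading — but it is harmless, since a swap preserves every group's cardinality and hence $|\overline{V}_{0d}|$ whether or not group $0$ participates, and the signature argument goes through unchanged.
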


   \begin{proof}
    To prove the required result, we need to show that if a state $s_{d'}$ is a neighbor of the state $s_d$, then, $s_d$ is also a neighbor of $s_{d'}$. Since neighbors are defined using three different actions, we consider the following cases separately:
    
    $\bullet$ \textbf{Swap:} Consider that $s_{d'}$ is obtained from $s_d$ by swapping PRBs $j_1$ and $j_2$ belonging to groups $i_1$ and $i_2$ respectively. Then, from the definition of the swap action,
    $\overline{V}_{id'} = \overline{V}_{id}$ for all $i \neq i_1,i_2$, $\overline{V}_{i_1d'} = (\overline{V}_{i_1d} \setminus \{j_1\}) \cup \{j_2\}$ and $\overline{V}_{i_2d'} = (\overline{V}_{i_2d} \setminus \{j_2\}) \cup \{j_1\}$.
    Now, let us see if state $s_d$ can be obtained from $s_{d'}$. Say PRBs $j_1$ and $j_2$ are picked for swapping in $s_{d'}$. Note that in $s_{d'}$, $j_1 \in \overline{V}_{i_2d'}$ and $j_2 \in \overline{V}_{i_1d'}$. For the resulting state $s_{d''}$, we have:
    \begin{eqnarray*}
     \overline{V}_{id''} &=& \overline{V}_{id'} = \overline{V}_{id}, \forall \ i \neq i_1,i_2, \\
     \overline{V}_{i_1d''} &=& (\overline{V}_{i_1d'} \setminus \{j_2\}) \cup \{j_1\} = \overline{V}_{i_1d}, \\
      \overline{V}_{i_2d''} &=& (\overline{V}_{i_2d'} \setminus \{j_1\}) \cup \{j_2\} = \overline{V}_{i_2d}.    
    \end{eqnarray*}
    
    Therefore, $\overline{V}_{id''} = \overline{V}_{id}$ for all $i$ which implies that $s_{d''} \equiv s_d$. So, $s_d$ is also a neighbor of $s_{d'}$ and can be obtained from $s_{d'}$ using a swap action only.
    
    $\bullet$ \textbf{Add:} Consider that $s_{d'}$ is obtained from $s_d$ by adding PRB $j_1$ to group $i_1$. Then, from the definition of the add action, 
    $\overline{V}_{id'} = \overline{V}_{id}$ for all $i \neq i_1,0$, $\overline{V}_{i_1d'} = \overline{V}_{i_1d} \cup \{j_1\}$ and $\overline{V}_{0d'} = \overline{V}_{0d} \setminus \{j_1\}$.
    Now, let us see if state $s_d$ can be obtained from $s_{d'}$. Say PRB $j_1$ is picked for a drop action in $s_{d'}$. Note that in $s_{d'}$, $j_1 \in \overline{V}_{i_1d'}$. For the resulting state $s_{d''}$, we have:
    \begin{eqnarray*}
       \overline{V}_{id''} &=& \overline{V}_{id'}, \forall \ i \neq i_1,0, \\
       \overline{V}_{i_1d''} &=& \overline{V}_{i_1d'} \setminus \{j_1\} = \overline{V}_{i_1d}, \\
       \overline{V}_{0d''} &=& \overline{V}_{0d'} \cup \{j_1\} = \overline{V}_{0d}.
    \end{eqnarray*}
 Therefore, $\overline{V}_{id''} = \overline{V}_{id}$ for all $i$ which implies that $s_{d''} \equiv s_d$. So, $s_d$ is also a neighbor of $s_{d'}$ and can be obtained from $s_{d'}$ using a drop action only.
 
 $\bullet$ \textbf{Drop:} The proof for the drop action is very similar to that for the add action. It can be shown in the same manner that if $s_{d'}$ is obtained from $s_d$ using a drop action, $s_d$ can be obtained from $s_{d'}$ using an add action and so
 $s_d$ is also a neighbor of $s_{d'}$.
 \end{proof}
In the next section, we define the TPM.
  
 \subsubsection{Transition Probability Matrix}
Let $p_{dd'}$ denote the probability that the DTMC transitions to $s_{d'}$ in the next step from the current state $s_d$. The transition happens in two steps. $1$) In state $s_d$, we first randomly choose one of the three actions $A_1,A_2$ or $A_3$ and then 
randomly choose a neighboring state $s_{d_p}$ that can be obtained from $s_d$ by performing the chosen action. The state $s_{d_p}$ is referred to as the proposed next state. $2$) Based on the reward values $E(s_d)$ and $E(s_{d_p})$, the proposed transition from
$s_d$ to $s_{d_p}$ is either accepted, i.e. $s_{d'} = s_{d_p}$ or rejected, i.e. $s_{d'} = s_d$. Next, we discuss these steps in detail.\\
$\bullet$ Step 1: In this step, one of the three actions is picked. Probability of picking every action is different. Action $A_1$ is picked with probability (w.p.) $\beta_{dA_1} = \frac{1}{3}$, $A_2$ is picked w.p.
$\beta_{dA_2} = \frac{2}{3} \times \frac{N-|\overline{V}_{0d}|}{L(|\overline{V}_{0d}|+1) + (N-(|\overline{V}_{0d}|+1))}$
and $A_3$ is picked w.p. $\beta_{dA_3} = \frac{2}{3} \times \frac{L|\overline{V}_{0d}|}{L|\overline{V}_{0d}| + (N-|\overline{V}_{0d}|)}$. 
With the remaining probability, the state of the DTMC remains unchanged. $A_3$ corresponds to the add action and so, is chosen with a probability directly proportional to the
number of unused PRBs and the number of multicast groups. Therefore, for greater number of groups and unused PRBs, the algorithm is more likely to choose the add action. Similarly, for greater number of used PRBs, the algorithm is more likely to
choose the drop action. \par
Now we explain how one of the neighboring states is chosen for potential transition given the chosen action.
If the chosen action is $A_1$, the two PRBs to be swapped, $j_1$ and $j_2$ are chosen uniformly at random from $[N]$. The swap of $j_1$ and $j_2$ is then performed as discussed in Section~\ref{neighboring_states}.
For $A_2$, the PRB to be dropped, $j_1$ is picked uniformly at random from $[N]\setminus\overline{V}_{0d}$ and dropped as discussed in Section~\ref{neighboring_states}. Similarly for $A_3$,
a group $i_1$ is picked uniformly at random from $[L]$ and a PRB to be added to it, $j_1$ is chosen uniformly at random from $\overline{V}_{0d}$. The addition of $j_1$ to $i_1$ is then done as discussed in Section~\ref{neighboring_states}.
Since different actions lead to different sets of potential neighboring states, we will use $s_{d_{A_i}}$ to denote a state that can be obtained from $s_d$ by performing action $A_i, i \in \{1,2,3\}$. 
In the next step, we discuss how the transition probabilities are finally determined.\\
$\bullet$ Step 2: Let $s_{d'}$ denote the proposed state for transition. If $s_{d'}$ has reward greater than or equal to that of $s_d$, the DTMC transitions to $s_{d'}$. Otherwise, the transition to $s_{d'}$ 
takes place w.p. $e^{(-(E(s_d)-E(s_{d'}))/T)}$. Thus, the probability that the DTMC will transition to the proposed state $s_{d'}$ is $\alpha_{dd'} = \min\left(1,e^{(-(E(s_d)-E(s_{d'}))/T)}\right)$.
Here, $T$ is a parameter commonly known as `temperature'. Unless stated otherwise, we assume $T>0$ is fixed and we denote the corresponding time homogeneous DTMC by $\{X_n^T\}_{n \geq 1}$. \par   
   
 Let $s_{d_{A_1}}, s_{d_{A_2}}$ and $s_{d_{A_3}}$ denote the states resulting from $s_d$ due to $A_1, A_2$ and $A_3$ respectively.
 Then the corresponding transition probabilities take the following form :
  \begin{eqnarray}
 \label{eqn:tpm_a1}  &p_{dd_{A_1}} =  \frac{1}{N(N-1)} \beta_{dd_{A_1}} \alpha_{dd_{A_1}}, \\
\label{eqn:tpm_a2} &p_{dd_{A_2}} = \frac{1}{N-|\overline{V}_{0d}|} \beta_{dd_{A_2}} \alpha_{dd_{A_2}}, \\
\label{eqn:tpm_a3} &p_{dd_{A_3}} = \frac{1}{|\overline{V}_{0d}|} \frac{1}{L} \beta_{dd_{A_3}} \alpha_{dd_{A_3}}, \\
\label{eqn:tpm_4} &p_{dd'} = 0, \text{if} \ s_{d'} \ \text{is not a neighbor of} \ s_d.
   \end{eqnarray}
Note that (\ref{eqn:tpm_a1}), (\ref{eqn:tpm_a2}), (\ref{eqn:tpm_a3}) and (\ref{eqn:tpm_4}) completely describe the TPM.
In the randomized scheme here, we aim to simulate this DTMC with these transition probabilities.
The steps involved in the randomized scheme are presented in the form of a pseudo-code in Algorithm~\ref{algo:SA}. Note that the TPM of the DTMC is not being stored in this algorithm and the transition probabilities defined above can be determined
in polynomial time. Thus, the TPM satisfies all the conditions stated above for computational feasibility of the algorithm. In the next result, we prove certain important properties of the DTMC.

 \begin{algorithm}
	\KwIn{Rates $r_{ij} \forall \ i\in[L]$ and $j\in[N]$, $max\_iter$ = $10^5$}
	Initialize: $s_0$, initial random allocation state\\
	$s_d \leftarrow s_0$\\
	
	\For{$d' = 1:max\_iter$}{\label{for_start}
	$s_{d'} \leftarrow s_d$\\
	$T \leftarrow \frac{1}{\log(k)}$\\
		Pick action $A_1, A_2$ or $A_3$ w.p. $\beta_{dA_1}$, $\beta_{dA_2}$ and $\beta_{dA_3}$ respectively \\

		\uIf{{\rm action}=$A_1$} 
		{Pick any two PRBs, $j_1,j_2 \in [N]$. Say, $j_1 \in \overline{V}_{i_1d'} \ \& \ j_2 \in \overline{V}_{i_2d'}$\\
		$\overline{V}_{i_1d'} = \overline{V}_{i_1d'} \setminus \{j_1\} \cup \{j_2\}$, $\overline{V}_{i_2d'} = \overline{V}_{i_2d'} \setminus \{j_2\} \cup \{j_1\}$\\}
		\uElseIf{{\rm action}=$A_2$}{
		Pick a PRB, $j \in \{\overline{V}_{1d'}, \ldots \overline{V}_{Ld'}\}$. Say, $j \in \overline{V}_{id'}$\\
		$\overline{V}_{id'} = \overline{V}_{id'} \setminus \{j\}$, $\overline{V}_{0d'} = \overline{V}_{0d'} \cup \{j\}$\\}
		\Else{ Pick any $j \in \overline{V}_{0d'}$ and any $i \in \{1,2, \ldots ,L\}$\\
		$\overline{V}_{id'} = \overline{V}_{id'} \cup \{j\}$, $\overline{V}_{0d'} = \overline{V}_{0d'} \setminus \{j\}$
		} 
		$s_d \leftarrow s_{d'}$, if $E(s_{d'}) \geq E(s_d)$ \\
		$s_d \leftarrow s_{d'}$ w.p. $e^{(-(E(s_d)-E(s_{d'}))/T)}$, otherwise\\
		    	}
	$s_d$ is the optimal resource allocation
	\caption{Algorithm for the Randomized Scheme}
	\label{algo:SA}
\end{algorithm}

 \begin{lemma} \label{dtmc_features}
  The constructed DTMC $\{X_n^T\}_{n \geq 1}$ is finite, aperiodic and irreducible for every $T \in (0,\infty)$.
 \end{lemma}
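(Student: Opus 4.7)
The argument splits into three routine verifications, one for each of the properties to be checked.

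\emph{Finiteness.} Each of the $N$ PRBs is assigned to exactly one of the $L+1$ labels $G_0, G_1, \ldots, G_L$, so $|\chi| = (L+1)^N < \infty$, establishing finiteness immediately.

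\emph{Irreducibility.} I would construct, for any two states $s_d, s_{d'} \in \chi$, an explicit finite chain of transitions of strictly positive probability from $s_d$ to $s_{d'}$. A convenient route passes through the ``empty'' state $s_\emptyset$ in which $\overline{V}_0 = [N]$. Starting at $s_d$, so long as $|\overline{V}_{0d}| < N$ the drop action $A_2$ is available with $\beta_{dA_2} > 0$, and the Metropolis acceptance factor $\min\{1, \exp(-\Delta E/T)\}$ is strictly positive for every $T \in (0,\infty)$; iterating at most $N$ such drops reaches $s_\emptyset$ with positive probability. Symmetrically, from $s_\emptyset$ one reaches $s_{d'}$ using at most $N$ applications of the add action $A_3$ (which has positive $\beta$ whenever $|\overline{V}_0| > 0$ and $L \geq 1$), each step accepted with positive probability by the same bound on the Metropolis factor. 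Concatenating the two pieces yields a path from $s_d$ through $s_\emptyset$ to $s_{d'}$ of positive probability, so the chain communicates between every pair of states.

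\emph{Aperiodicity.} It suffices to exhibit one state with a positive self-loop probability, since irreducibility then forces period one at every state. The empty state $s_\emptyset$ works cleanly: when the swap action $A_1$ is chosen, which happens with probability $\beta_{s_\emptyset A_1} = 1/3$, the two PRBs drawn from $[N]$ necessarily lie in the same group $G_0$, so the swap leaves the allocation unchanged and the Metropolis factor equals $1$. Hence $p_{s_\emptyset s_\emptyset} \geq 1/3 > 0$, and $s_\emptyset$ has period one.

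\emph{Expected obstacle.} There is no serious obstacle here; the work is essentially bookkeeping. The two observations to pin down are that the proposal kernel factor $\beta$ is strictly positive precisely when the corresponding action is applicable (drop needs a used PRB, add needs both an unused PRB and $L \geq 1$), and that $T \in (0,\infty)$ is exactly the condition needed to keep the Metropolis acceptance factor strictly positive for every neighbour pair. Once these are noted, each of the three conclusions follows directly from the construction of the DTMC given earlier in this section.
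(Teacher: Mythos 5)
Your proof is correct and follows essentially the same route as the paper: finiteness by counting $(L+1)^N$ states, irreducibility by dropping every assigned PRB to reach the empty allocation and then adding PRBs to build the target state, and aperiodicity via a positive self-loop probability. The only cosmetic difference is the source of the self-loop — you use a degenerate swap within $G_0$ at the empty state, while the paper appeals to the positive probability of the chain staying put (via the leftover proposal probability and Metropolis rejection) — but both are valid and the argument is otherwise identical.
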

\begin{proof}
The DTMC is finite because the total number of possible resource allocation states is $(L+1)^N$. The DTMC has self loops as there is a positive probability of remaining in
the same state. Hence, the DTMC is aperiodic. The DTMC can transition from any state $s_d$ to any other state $s_{d'}$ by first dropping all the used PRBs into $G_0$ by 
choosing the drop action repeatedly. Then, the PRBs can be added one by one according to the assignment in state 
$s_{d'}$ by choosing the add action repeatedly. Therefore, the DTMC is irreducible.
\end{proof}
Having established that the DTMC is finite, aperiodic and irreducible, it is guaranteed to have a unique steady state distribution. In the following result, we determine this steady state distribution.

\begin{theorem}
 For any fixed $T>0$, the steady state distribution of the DTMC $\{X_n^T\}_{n \geq 1}$ is given by
 \begin{equation*}
  \pi_d^T = \frac{e^{E(s_d)/T}}{\sum_{s_d} e^{E(s_d)/T}} \forall \ s_d \in \chi.
 \end{equation*}
\end{theorem}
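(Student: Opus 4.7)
My plan is to verify the claimed distribution by checking detailed balance, which together with irreducibility and aperiodicity (Lemma~\ref{dtmc_features}) uniquely characterises the stationary distribution. Write $\pi_d^T = Z^{-1} e^{E(s_d)/T}$ with $Z = \sum_{s_d} e^{E(s_d)/T}$. I would show that for every pair of neighbouring states $s_d, s_{d'}$,
\begin{equation*}
\pi_d^T\, p_{dd'} \;=\; \pi_{d'}^T\, p_{d'd},
\end{equation*}
and that $p_{dd'}=p_{d'd}=0$ otherwise (by Lemma~\ref{lemma:neighbors} non-neighbours are mutually unreachable in one step). Summing over $d$ then gives $\sum_d \pi_d^T p_{dd'}=\pi_{d'}^T$, i.e.\ $\pi^T$ is invariant.

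The first step is to decompose each transition probability as $p_{dd'} = q(s_{d'}\mid s_d)\,\alpha_{dd'}$, where $q(s_{d'}\mid s_d)$ is the proposal probability and $\alpha_{dd'}=\min\bigl(1, e^{-(E(s_d)-E(s_{d'}))/T}\bigr)$ is the Metropolis acceptance. A direct calculation using $\alpha_{dd'}/\alpha_{d'd} = e^{(E(s_{d'})-E(s_d))/T}$ (considering separately the cases $E(s_{d'})\ge E(s_d)$ and $E(s_{d'})<E(s_d)$) shows that
\begin{equation*}
\pi_d^T\, \alpha_{dd'} \;=\; \pi_{d'}^T\, \alpha_{d'd}.
\end{equation*}
So detailed balance reduces to checking that the proposal kernel is symmetric, $q(s_{d'}\mid s_d) = q(s_d\mid s_{d'})$.

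The symmetry of $q$ must be verified separately for the three actions, and this is the part I expect to be the real work. For a swap neighbour, Lemma~\ref{lemma:neighbors} forces the reverse move to also be a swap; both proposals occur with probability $\tfrac{1}{3}\cdot\tfrac{1}{N(N-1)}$, which is state-independent, so symmetry is immediate. For a drop from $s_d$ to $s_{d'}$, the reverse move is necessarily an add; writing $k=|\overline{V}_{0d}|$ so that $|\overline{V}_{0d'}|=k+1$, I would substitute the definitions of $\beta_{dA_2}$ and $\beta_{d'A_3}$ and show
\begin{equation*}
\beta_{dA_2}\cdot\frac{1}{N-k} \;=\; \frac{2/3}{L(k+1)+(N-k-1)} \;=\; \beta_{d'A_3}\cdot\frac{1}{L(k+1)}.
\end{equation*}
The add case is just the same identity read in reverse. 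Notice this is exactly why the paper tilts $\beta_{dA_2}$ and $\beta_{dA_3}$ by the specific factors it does: those factors are engineered to cancel the asymmetry in the number of candidate targets for drop versus add.

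Once symmetry of $q$ is established, detailed balance follows and Lemma~\ref{dtmc_features} (finite, irreducible, aperiodic) guarantees $\pi^T$ is the unique stationary distribution, completing the proof. The main obstacle is purely algebraic bookkeeping in the drop/add identity; no substantive probabilistic difficulty arises beyond setting up the Metropolis--Hastings framework correctly.
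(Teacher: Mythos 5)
Your proof is correct and follows essentially the same route as the paper: verify detailed balance $\pi_d^T p_{dd'} = \pi_{d'}^T p_{d'd}$ case by case for swap, add, and drop neighbours, using the symmetry of the neighbouring relation and the fact that the $\beta$ factors are engineered so that the proposal kernel is symmetric (your drop/add identity is exactly the cancellation the paper performs via $|\overline{V}_{0d}| = |\overline{V}_{0d'}|+1$). Your explicit Metropolis--Hastings framing (proposal times acceptance, with $\pi_d^T\alpha_{dd'}=\pi_{d'}^T\alpha_{d'd}$) is just a cleaner packaging of the same computation.
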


\begin{proof}
To prove the required, we show that the transition probabilities satisfy $\pi_d^Tp_{dd'} = \pi_{d'}^Tp_{d'd}$ for every $s_d, s_{d'}$. This will imply that the DTMC is reversible and has steady state distribution 
$\pi_d^T = \frac{e^{E(s_d)/T}}{\sum_{s_d} e^{E(s_d)/T}}, \forall \ s_d \in \chi$. \par
Suppose $s_d$ and $s_{d'}$ are not neighboring states, then $p_{dd'} =  p_{d'd} = 0$. Hence, the required follows trivially. Thus, it suffices to consider the case when $s_d$ and $s_{d'}$ are neighbors.
If $s_d$ and $s_{d'}$ are neighbors, there are three possibilities, that $s_{d'}$ is obtained from $s_d$ by $1$) swap action, $2$) drop action or $3$) add action. We consider each case separately:\\
\noindent
 $\bullet$ \textbf{Swap:} If the transition from $s_d$ to $s_{d'}$ occurs due to a swap action, then $p_{dd'}$ and $p_{d'd}$ take the form given by~(\ref{eqn:tpm_a1}). For $ E(s_d) \geq E(s_{d'})$ we have:
 \begin{align*}
 \lefteqn{ \frac{e^{E(s_d)/T}}{\sum_{d \in \chi} e^{E(s_d)/T}} \frac{1}{3} \frac{1}{N(N-1)} e^{-(E(s_d)-E(s_{d'}))/T}} \\
  &= \frac{e^{E(s_{d'})/T}}{\sum_{d \in \chi} e^{E(s_d)/T}} \frac{1}{3} \frac{1}{N(N-1)}, \\ 
 \end{align*}
which is true. Therefore, the given $\pi_d^T$ satisfies $\pi_d^Tp_{dd'} = \pi_{d'}^Tp_{d'd}$ for the swap action.
This can be similarly shown for $E(s_d) < E(s_{d'})$ as well.\\
$\bullet$ \textbf{Add:} If the transition from $s_d$ to $s_{d'}$ occurs due to an add action, $p_{dd'}$ and $p_{d'd}$ will be given by~(\ref{eqn:tpm_a3}) and~(\ref{eqn:tpm_a2}) respectively.
For $E(s_d) \geq E(s_{d'})$, we have:
\begin{align}
  \lefteqn{\frac{2\pi_d^T}{3\left(L|\overline{V}_{0d}| + (N-|\overline{V}_{0d}|)\right)} e^{-(E(s_d)-E(s_{d'}))/T} } \nonumber \\
  \label{eqn:add_tpm} &= \frac{2\pi_{d'}^T}{3\left(L(|\overline{V}_{0d'}|+1) + (N-(|\overline{V}_{0d'}|+1))\right)}.
\end{align}
Since $s_{d'}$ is obtained from $s_d$ using an add action, $|\overline{V}_{0d}| = |\overline{V}_{0d'}|+1$ which means that $L|\overline{V}_{0d}| + (N-|\overline{V}_{0d}|) = L(|\overline{V}_{0d'}|+1) + (N-(|\overline{V}_{0d'}|+1))$ in (\ref{eqn:add_tpm}) above.
So, (\ref{eqn:add_tpm}) becomes:
\begin{eqnarray*}
  \pi_d^T e^{-(E(s_d)-E(s_{d'}))/T} &=& \pi_{d'}^T, \\
  \implies \frac{e^{E(s_d)/T}}{\sum_d e^{E(s_d)/T}} e^{-(E(s_d)-E(s_{d'}))/T} &=& \frac{e^{E(s_{d'})/T}}{\sum_d e^{E(s_d)/T}},
\end{eqnarray*}
which is true. Therefore, the given $\pi_d^T$ satisfies $\pi_d^Tp_{dd'} = \pi_{d'}^Tp_{d'd}$ for the add action.
This can be similarly shown for $E(s_d) < E(s_{d'})$ as well.\\
$\bullet$ \textbf{Drop:} If the transition from $s_d$ to $s_{d'}$ occurs due to a drop action, $p_{dd'}$ and $p_{d'd}$ will be given by~(\ref{eqn:tpm_a2})  and~(\ref{eqn:tpm_a3}) respectively.
Also, in this case, $|\overline{V}_{0d'}| = |\overline{V}_{0d}|+1$. Following the same steps as for the add action, it can be shown that the given $\pi_d^T$ satisfies $\pi_d^Tp_{dd'} = \pi_{d'}^Tp_{d'd}$ for the drop action as well. \par

Therefore, we conclude that the steady state distribution of the DTMC $\{X_n^T\}_{n \geq 1}$ is given by $\pi_d^T = \frac{e^{E(s_d)/T}}{\sum_{s_d} e^{E(s_d)/T}}$ for every $s_d \in \chi$.
\end{proof}


Now that we have the steady state distribution of the DTMC for a fixed value of $T > 0$, it can be shown that, as $T$ goes to $0$, the steady state distribution $\pi_d = \lim_{T \rightarrow 0} \pi_d^T$ takes the following form:
\begin{equation*}
 \pi_d = \begin{cases}
          {1/|\arg\max_d E(s_d)|}, & \forall \ d \in \arg\max_d E(s_d),\\
	{0}, &  \text{otherwise}.
         \end{cases}
\end{equation*}
Thus, $\pi_d$ is a uniform distribution over the optimal resource allocation states. \par

We mentioned the parameter $T$ above, while discussing the TPM. Now, we elaborate it's significance in more detail. SA involves an exploration versus exploitation trade-off. 
It achieves a balance between exploration and exploitation through the temperature parameter $T$. $T$ is kept very high in the beginning so that the algorithm can explore a large number of states quickly.
As the time index increases, $T$ goes on decreasing and so does the likelihood of transitioning to lower reward states. The most widely used function for temperature is $1/\log(n)$~\cite{bh}, $n$ being the time index.
This form of $T$ ensures that the algorithm escapes local optima faster and ends up at the global optimum as $T$ goes to $0$.
Specifically, by varying $T$, we can achieve the required $\lim_{n \rightarrow \infty} \mathbb{P}(E(X_n) = E^\star) = 1$.
In the next section, we compare the results of the RS with the optimal solution obtained by solving the BLP \Bstar for small input sizes. 

    \subsection{Performance comparison of the RS and the BLP}
    The optimal resource allocation can be obtained by solving the BLP \Bstar \ from Section~\ref{sec:sys_model}. BLPs, as mentioned before, are inherently hard to solve. They can however be solved for small input sizes. 
    Using the computing power at our disposal (Intel i7, $2.90$ GHz quad-core processor with $16$ GB RAM), we were able to obtain a solution of \Bstar \ for an input size of up to $4$ groups.
    Note that the search space scales as $(L+1)^N$ where $L$ is the number of groups and $N$ is the number of PRBs in a sub-frame. So, even for $4$ groups and $100$ PRBs, the search space consists of $5^{100}$ states which is why the BLP fails to give a 
    solution for more than $4$ groups. The outputs of the BLP and the RS for up to $4$ groups, 
    averaged over $100$ different channel conditions are tabulated in Table~\ref{bilp}. As we can see, the output of the RS is very close (difference in number of PRBs saved $< 5.5\%$) to the optimal obtained by solving the BLP.
    
    
    \begin{center}
    \begin{table}
\captionof{table}{Performance comparison of RS and BLP} \label{bilp} 
 \begin{center}
\begin{tabular}{ | m{2 cm} | m{1.2 cm}| m{1.2 cm}| m{1.2cm}|} 
 \hline
 \textbf{No. of groups} & \textbf{RS} & \textbf{BLP} & \textbf{\%~Error}\\
 \hline
 \hline
 $2$ & $92$ & $96$ & $4.16$ \\ 
    \hline
  $3$ & $90$ & $94$ & $4.25$ \\ 
  \hline
   $4$ & $86$ & $91$ & $5.49$ \\
 \hline 
 \end{tabular}
\end{center}
     \end{table}
      \end{center}
     
    The randomized scheme works iteratively to obtain an optimal solution and so, it is not guaranteed to converge within the sub-frame duration of $1$ ms. We require resource allocation schemes that can output a near optimal
    solution (if not optimal) every sub-frame. We now present two heuristic schemes that give us a reasonably good performance. We compare the output of one of the proposed schemes with the output of the RS and show that it gives a solution
    very close to the optimum and takes significantly less time to run than the RS.
    
    \section{Heuristic Schemes for Resource Allocation} \label{sec:RASh}
    In this section, we propose two heuristic schemes for allocating PRBs to multicast groups. The first scheme allocates PRBs greedily and the second one makes use of Linear Programming (LP) relaxation.
    Allocation of resources in LTE is done in every sub-frame. So, for brevity, we fix a sub-frame $t$ and omit it from notations in this section. Grouping strategy $\Delta$ impacts resource allocation via $r_{ij}^\Delta$, which is the rate 
achievable by group $i$ in PRB $j$. Our aim is to propose resource allocation for any given $\Delta$. So, we fix $\Delta$ and omit it from the notations as well. 

\subsection{Greedy Allocation} \label{greedyalgo}
\begin{algorithm}
	\KwIn{Rates $r_{ij}$ for all $i\in[L]$ and $j\in[N]$}
	Initialize: ${\cal N}= [N]$, ${\cal L} =[L]$ and $x_{ij} = 0$ for every $i,j$\\
	\While{${\cal N} \cap {\cal L} \not= \phi$}{\label{ln:first_while}
		Assign $(i^\star,j^\star) = \arg\max_{(i,j)\in {\cal N}\times{\cal L} }r_{ij}$ \\
		$x_{i^\star j^\star} \leftarrow 1$, 
		${\cal N} \leftarrow {\cal N} \setminus \{j^\star\}$\\ 
		\If{$\sum_{j\in [N]} x_{i^\star j} r_{i^\star j} \ge R$}{${\cal L} \leftarrow {\cal L} \setminus \{i^\star\}$}
	}
	\caption{Greedy Resource Allocation Scheme}
	\label{fig:greedy}
\end{algorithm}
The pseudo code for this scheme is given in Algorithm~\ref{fig:greedy}. Here, ${\cal N}$ and ${\cal L}$ denote the unallocated PRBs and the groups whose rate requirements are
not yet satisfied, respectively. These quantities are updated every iteration and are monotone non-increasing. The algorithm terminates when either of the two sets becomes empty. In each iteration, the algorithm determines 
indices $i^\star$ and $j^\star$ from ${\cal L}$ and ${\cal N}$, respectively, that correspond to the maximum $r_{ij}$. PRB $j^\star$ is allotted to group $i^\star$ and is removed from ${\cal N}$.
Also, if the total sum rate on all the allotted PRBs to $i^\star$ is greater than or equal to the requirement $R$,
then $i^\star$ is also removed from ${\cal L}$.  Next iteration starts with the new values of ${\cal N}$ and 
${\cal L}$. Note that ${\cal N}$ is monotone decreasing, thus, the algorithm terminates in at most $N$ 
iterations. At the termination, if only ${\cal N} = \phi$ and ${\cal L}$ is non-empty, then the greedy resource allocation scheme fails to output a feasible resource allocation, else variables $x_{ij}$'s yield the required resource allocation.
The resource allocation thus obtained is inherently fair as the algorithm provides the minimum required rate $R$ to all the UEs.

\subsection{LP-relaxation Based Allocation}
Recall that the optimal resource allocation can be obtained as a solution to the BLP \Bstar. BLPs are inherently hard to solve and cannot be solved in polynomial time except for very small input sizes.
One standard approach is to consider LP-relaxation of the BLP i.e., relax the binary variables (in our case, $x_{ij}$'s) to take values in the interval $[0,1]$. The resulting LP can be solved in
polynomial time. Let $\tilde{x}_{ij}$ for all $i,j$ denote the optimal solution of the relaxed LP. Now, $\tilde{x}_{ij}$'s are real numbers and we need to convert them to binary values
without violating the constraints of \Bstar. To do so, we use a greedy algorithm (Algorithm~\ref{algo:LP}) similar to the one used in Section~\ref{greedyalgo} above.
In each iteration, a PRB $j$ is assigned to an unsatisfied group $i$ if it has the largest value of $\tilde{x}_{ij}$ for that PRB. This is intuitive, as a higher value of $\tilde{x}_{ij}$ indicates that group $i$ was assigned a larger share of PRB $j$
by the LP. Note that the resource allocation obtained using this LP-relaxation based scheme is inherently fair as the algorithm ensures that the minimum required rate $R$ is provided to all the UEs.

\begin{algorithm}
	\KwIn{$\tilde{x}_{ij}$ for all $i\in[L]$ and $j\in[N]$}
	Initialize: ${\cal N}= [N]$, ${\cal L} =[L]$ and $x_{ij} = 0$ for every $i,j$\\
	\While{${\cal N} \cap {\cal L} \not= \phi$}{\label{ln:while_LP}
		Assign $(i^\star,j^\star) = \arg\max_{(i,j)\in {\cal N}\times{\cal L} }\tilde{x}_{ij}$ \\
		$x_{i^\star j^\star} \leftarrow 1$, 
		${\cal N} \leftarrow {\cal N} \setminus \{j^\star\}$\\ 
		\If{$\sum_{j\in [N]} x_{i^\star j} r_{i^\star j} \ge R$}{${\cal L} \leftarrow {\cal L} \setminus \{i^\star\}$}
	}
	\caption{Rounding off algorithm for LP-relaxation}
	\label{algo:LP}
\end{algorithm}


    
\subsubsection{Performance Comparison of RS and LP-relaxation}
In order to compare the performance of the LP-relaxation based allocation to that of the RS, we simulate an LTE cell with all the multicast UEs requiring the same content from the eNB. PRBs are allocated to the UEs using the RS as well as the LP-relaxation scheme.
We gradually increase the number of UEs in the cell starting from $10$ UEs and go up to $100$, adding $10$ UEs at a time. For each of the resulting $10$ scenarios, the PRB allocation is done for $100$ different fading variations using both the schemes.
The average number of PRBs saved is used as a measure for performance comparison. The results of the simulations are plotted in Fig.~\ref{SA}. Each point in the curves has been obtained by averaging over $100$ different channel gain variations.
Note that all the groups achieved the required rates at all points in the two curves.
Both the algorithms show a similar trend as the number of UEs in the cell increases. Even though the RS saves more PRBs throughout, the ratio of the number of PRBs saved by the RS to the number of PRBs saved by the LP-relaxation scheme is no more than $1.25$.

    
    \begin{center}
    \begin{table}
\captionof{table}{Time taken in seconds to run RS and LP-relaxation based algorithm} \label{time} 
 \begin{center}
\begin{tabular}{ | m{1.5cm} | m{0.7cm}| m{2cm}| m{0.7cm} |} 
 \hline
 \textbf{No. of UEs} & \textbf{RS} & \textbf{LP-relaxation} & \textbf{Ratio} \\
 \hline
 \hline
 $20$ & $0.082$ & $0.015$ & $5.47$ \\ 
\hline
$30$ & $0.086$ & $0.018$ & $4.78$ \\ 
\hline
$40$ & $0.086$ & $0.019$ & $4.53$ \\ 
\hline
$50$ & $0.087$ & $0.020$ & $4.35$ \\ 
\hline
$60$ & $0.089$ & $0.021$ & $4.24$ \\ 
\hline
$70$ & $0.092$ & $0.020$ & $4.6$ \\ 
\hline
$80$ & $0.097$ & $0.017$ & $5.71$ \\ 
\hline
$90$ & $0.097$ & $0.019$ & $5.11$ \\ 
\hline
$100$ & $0.096$ & $0.018$ & $5.33$ \\ 
\hline
 \end{tabular}
 \end{center}
 \end{table} 
\end{center}

    \subsubsection{Time Comparison of RS and LP-relaxation}
    Recall that, in LTE, the allocation of PRBs is done every sub-frame.
Since a sub-frame spans only $1$ ms in time, it is important for whatever resource allocation scheme we employ, to be time efficient as well.
    We now do a brief time comparison of the RS and LP-relaxation schemes. \par
    The RS is an iterative algorithm and cannot be guaranteed to converge within the span of a sub-frame. While simulating the RS in this paper, we perform $10^5$ iterations. However, for the time comparison here, we will first see how 
    the reward of the current state of the RS changes as a function of the number of iterations. Fig.~\ref{iter} illustrates the change in the reward of the current state of the RS as a function of the number of iterations
    for different number of UEs in the cell. We can observe from the figure that the output saturates well before $2000$ iterations in each curve. 
    So, for the sake of time comparison with the LP-relaxation scheme, we consider the time taken by just $2000$ iterations of the RS. Table~\ref{time} illustrates the time taken by the RS and the LP-relaxation scheme for different number of UEs in the cell.
    The time taken is averaged over $200$ different channel gains. We observe that the RS takes about $5$ times more time to run than the LP-relaxation scheme even with just $2000$ iterations. Note that in practice, depending upon the system, we might need
    to run the algorithm for a much larger number of iterations. \par    
    From the performance and time comparisons of the LP-relaxation scheme and the RS, we conclude that the LP-relaxation scheme performs nearly as well as the RS in 
    $5$ times lesser duration than the RS. Thus, the LP-relaxation scheme is a suitable resource
    allocation scheme for practical implementation. In the next section, we present two heuristic schemes for the grouping of UEs for multicast transmission.

    \begin{figure}[h]
    \centering
    \begin{minipage}{0.45\textwidth}
        \centering
       \includegraphics[scale = 0.38]{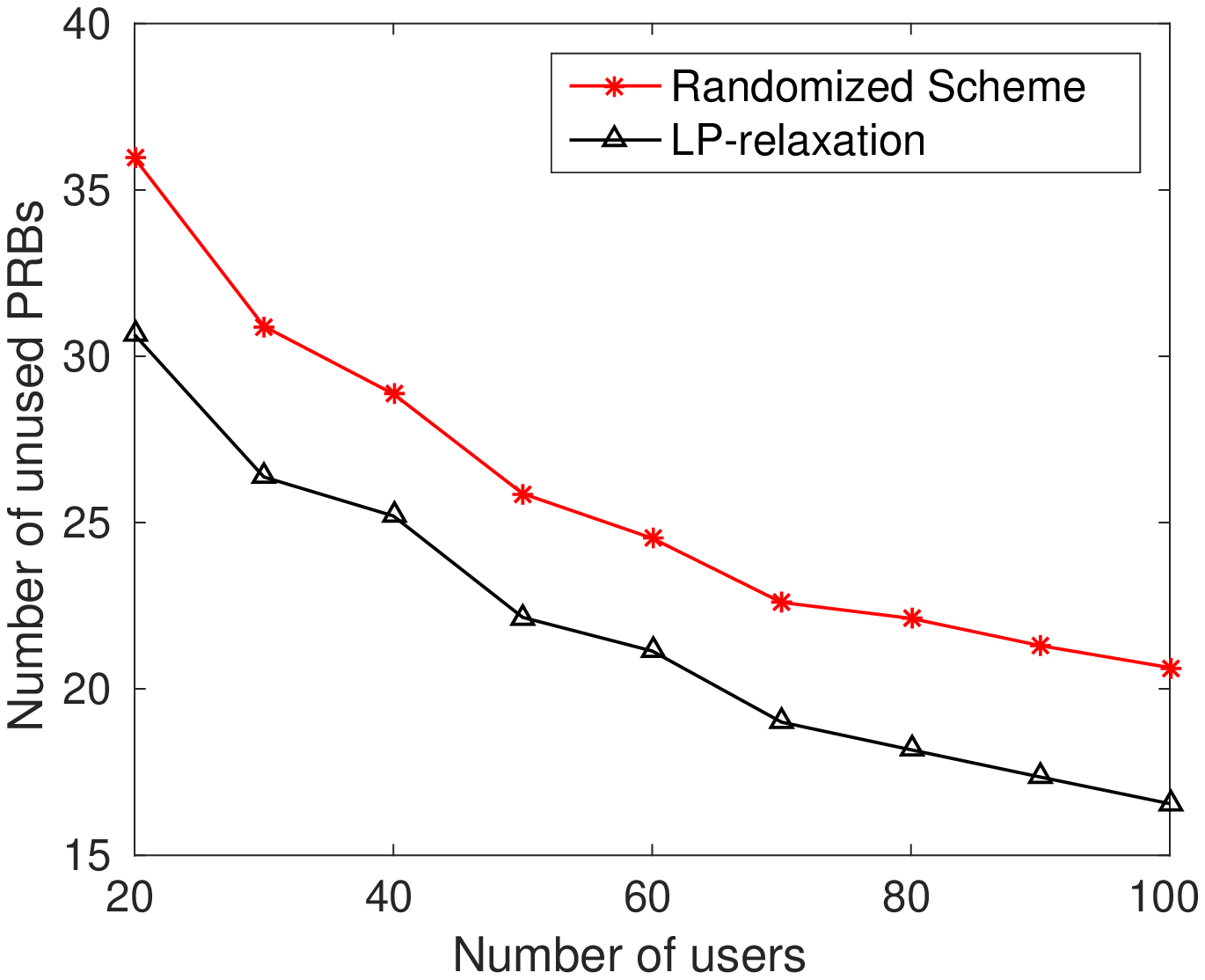} 
        \caption{Variation of the number of unused PRBs : a performance comparison of the LP-relaxation scheme and the RS.}
        \label{SA}
    \end{minipage}\hfill
    \begin{minipage}{0.45\textwidth}
        \centering
       \includegraphics[scale = 0.38]{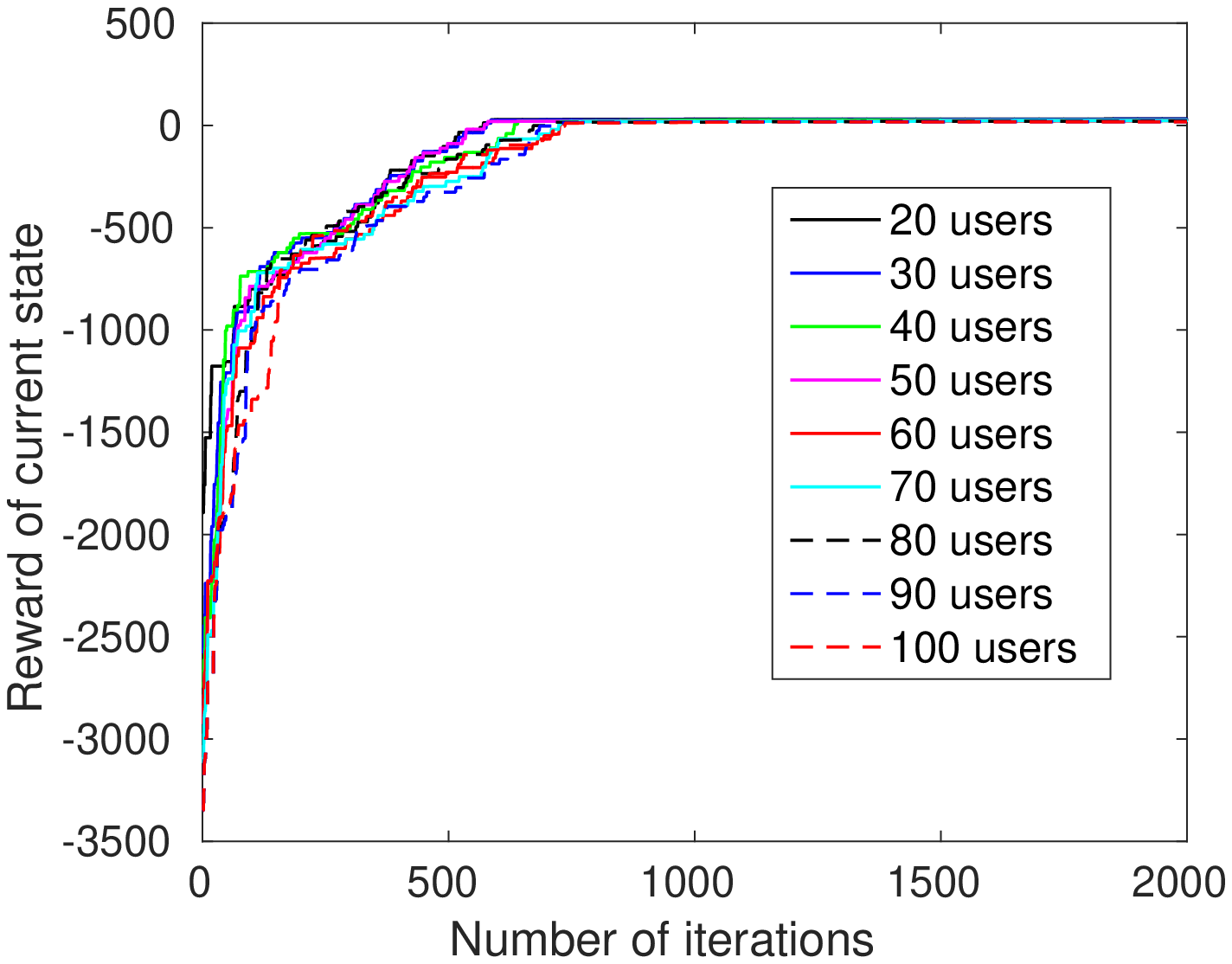}
        \caption{Variation of the reward of the state of RS with the increasing number of iterations.}
        \label{iter}
    \end{minipage}
\end{figure}
    
 \section{Heuristic Schemes for Grouping} \label{sec:grouping}
We have proved that obtaining optimal grouping strategy $\Delta^\star$ for the set of multicast UEs 
that maximizes the performance measure $\overline{S}^{\Delta^\star}$ is 
a NP-hard problem. Indeed, quantifying $\overline{S}^{\Delta}$ for a given grouping strategy
$\Delta$ is a very difficult task as the channel gains and hence the rates vary over time. This is because the
optimal resource allocation in a given sub-frame itself is a NP-hard problem.
However, even if some genie provides us with the value  $\overline{S}^{\Delta}$ for any given $\Delta$,
determining the optimal $\Delta^\star$ is still NP-hard (proved in Section~\ref{NPC2}).
Hence, in this section, we present two heuristic grouping algorithms, namely 
fixed size grouping and CQI based grouping schemes.

\subsection{Fixed Size Grouping}
In fixed size grouping, the group size is fixed at some integral value $k$ and then the UEs are grouped according to their average SNR values. Note that the average
SNR depends only on the slow fading components that include path loss and shadowing. Since we have assumed that the eNB has full CSI, average SNR values of all 
the UEs can be estimated at the eNB. In order to group the UEs, they are sorted in descending order of their average SNR values. The first $k$ UEs then form the first group, the second set of $k$
UEs form the second group and so on. The number of groups thus formed is $\left\lceil \frac{M}{k}\right \rceil$ where $M$ denotes the total number of UEs in the cell.\par

For resource allocation, rates are assigned to the groups on the basis of their instantaneous SNR values. 
The instantaneous SNR of UEs are mapped to CQI values which are further mapped to rates via standard SNR to CQI and CQI to MCS mappings respectively~\cite{seventh}. 
The rate achieved by a group in a particular PRB is equal to the minimum of the rates achievable by it's constituent members.
This is because, transmitting at a rate greater than this may lead to incorrect reception of data at the UE with the worst channel gain and hence the least achievable rate.
Limiting the number of UEs in a group keeps a check on the probability that some UE has much worse channel than that of others in the group for a given PRB. 
Note that, here, we are grouping UEs without considering their exact average SNR values. In the following grouping scheme, the groups are once again formed based on average SNR values but with variable group sizes.

\subsection{CQI based Grouping}
In this scheme, the eNB fixes the number of groups and then the UEs are assigned to various groups based on their average SNR values. 
In 3GPP standards for LTE~\cite{seventh}, a total of 15 CQI values have been defined with 15
indicating the best and $1$ indicating the worst channel. A range of SNR values 
are mapped to a particular CQI value (many to one map). Corresponding to each CQI value, an MCS is also predefined~\cite{seventh}. 
In keeping with the number of CQI values, the maximum possible number of groups under this strategy is fixed to be $15$. 

In LTE, a range of SNR values get mapped to a CQI value~\cite{seventh}. Let the minimum SNR that can be mapped to a CQI value $c$ be denoted by 
$\text{SNR}_{\rm min}(c)$. We define a threshold corresponding to this CQI at such a level that with a large probability (say $0.9$), the instantaneous SNR 
of the weakest UE in the group will stay above or at $\text{SNR}_{\rm min}(c)$. Specifically, a threshold $T(c)$ is defined such that, 
\begin{equation*}
\mathbb{P}\{ \text{SNR} \geq \text{SNR}_{\rm min}(c) | \text{SNR}_{\rm avg} = T(c) \} = 0.9.
\end{equation*}
To compute $T(c)$, we need the distribution of $h_{iu}[t]$. When $H_{iu}[t]$ (the fast-fading component of $h_{iu}[t]$ as defined in Section~\ref{sec:sys_model}) are 
i.i.d exponential with mean $1$, $T(c)$ can be computed using:
\begin{equation*}
T(c) = \frac{\text{SNR}_{\rm min}(c)}{\log(10/9)}.
\end{equation*}

The UEs are classified into groups on the basis of their average SNR values. 
Average SNR values greater than or equal to $T(15)$ are classified as Group 1 
and those with SNR below $T(2)$ are grouped into Group 15. UEs with average 
SNR between $T(14)$ and $T(15)$
are put into Group 2 and so on. Thus, Group 1 (Group 15) corresponds to the 
UEs with the best (worst) channel gains.\par

After the UEs are classified into groups, the rate for a particular group is set at the value corresponding to 
the weakest UE in the group. Once the achievable rate for each group is determined using the 
3GPP mappings~\cite{seventh}, the PRB allocation is done according to 
the resource allocation schemes discussed in the previous section. 
\section{Simulation Results}\label{sec:simulations}

\subsection{Simulation Settings}
Our system comprises of a single LTE cell of radius 375 meters, half of the inter-site distance mentioned in the simulation parameters for macro cell propagation model given in~\cite{sixth}.
We have simulated an LTE cell in MATLAB~\cite{matlab} using the LTE simulator designed in~\cite{mahima}. In order to create LTE specific physical layer conditions, we have created channels using the models recommended by 3GPP in~\cite{sixth}. The SNR to CQI and CQI to rate mapping has been done using the tables specified in the 3GPP documents~\cite{sixth}. \par
 
 An eNB located at the center of the cell that multicasts the MBMS content to all the multicast groups in the cell. The UEs are distributed uniformly at random within the cell and
 are grouped using the fixed and the CQI based grouping schemes proposed in Section~\ref{sec:grouping}. Resource allocation is also done using both the heuristic schemes proposed in Section~\ref{sec:RAS}. We compare the performance of the proposed schemes
 with each other as well as with unicast transmission. For calculating the average SNR, we use shadowing and path loss models as per 3GPP specifications~\cite{sixth}. For instantaneous SNR, we also take Rayleigh fading into account.
The parameters relevant to our simulations are given in Table~\ref{parameters}.
We assume channel gain to be determined by: 1): Path Loss, 2): Shadowing and 3): Multipath due to reflections from the surrounding environment.
The channel gain of each UE may be different for different PRBs. The rate requirement for each UE (R) is taken to be $1$ Mbps.\par

For a given grouping and resource allocation scheme, the performance is affected by two sources of randomness, (1) channel variations around mean on account of fast fading and (2) average channel gain variations on account of node positions.
We evaluate the average performance by averaging over these two sources of randomness. Towards this end, we consider $100$ random UE placements in the cell, and the performance of 
each topology is evaluated over $1000$ sub-frames with different channel gains.\par

In addition to unicast and the proposed grouping schemes, we also consider random grouping in the first set of simulations where each UE is placed in one of the $10$ groups uniformly at random. This is done to evaluate the role of grouping strategy 
in the system performance. The group size for fixed-size grouping is taken to be  $5$. 

\begin{center}
\begin{table}
\captionof{table}{System Simulation parameters \cite{sixth}} \label{parameters} 
\begin{center}
\begin{tabular}{ | m{4 cm} | m{4 cm}|} 
 \hline
 \textbf{Parameters} & \textbf{Values}\\
 \hline
 \hline
 System bandwidth & $20$ MHz \\
 \hline
 eNB cell radius & $375$ m \\ 
 \hline
 Path loss model & L = $128.1 + 37.6 \log10(d)$, $d$ in kilometers\\ 
 \hline
 Lognormal shadowing & Log Normal Fading with $10$ dB standard deviation \\
 \hline
 White noise power density & $-174$ dBm/Hz \\
 \hline
 eNB noise figure & $5$ dB \\
 \hline
 eNB transmit power & $46$ dBm \\
 \hline
  PRB width & $180$ kHz\\
 \hline
 Number of PRBs & $100$ per sub-frame\\
\hline
 \end{tabular}
 \end{center}
 \end{table}
\end{center}

\begin{figure*}[htb]
    \centering
    \begin{subfigure}[t]{0.45\textwidth}
        \centering
\includegraphics[scale = 0.38]{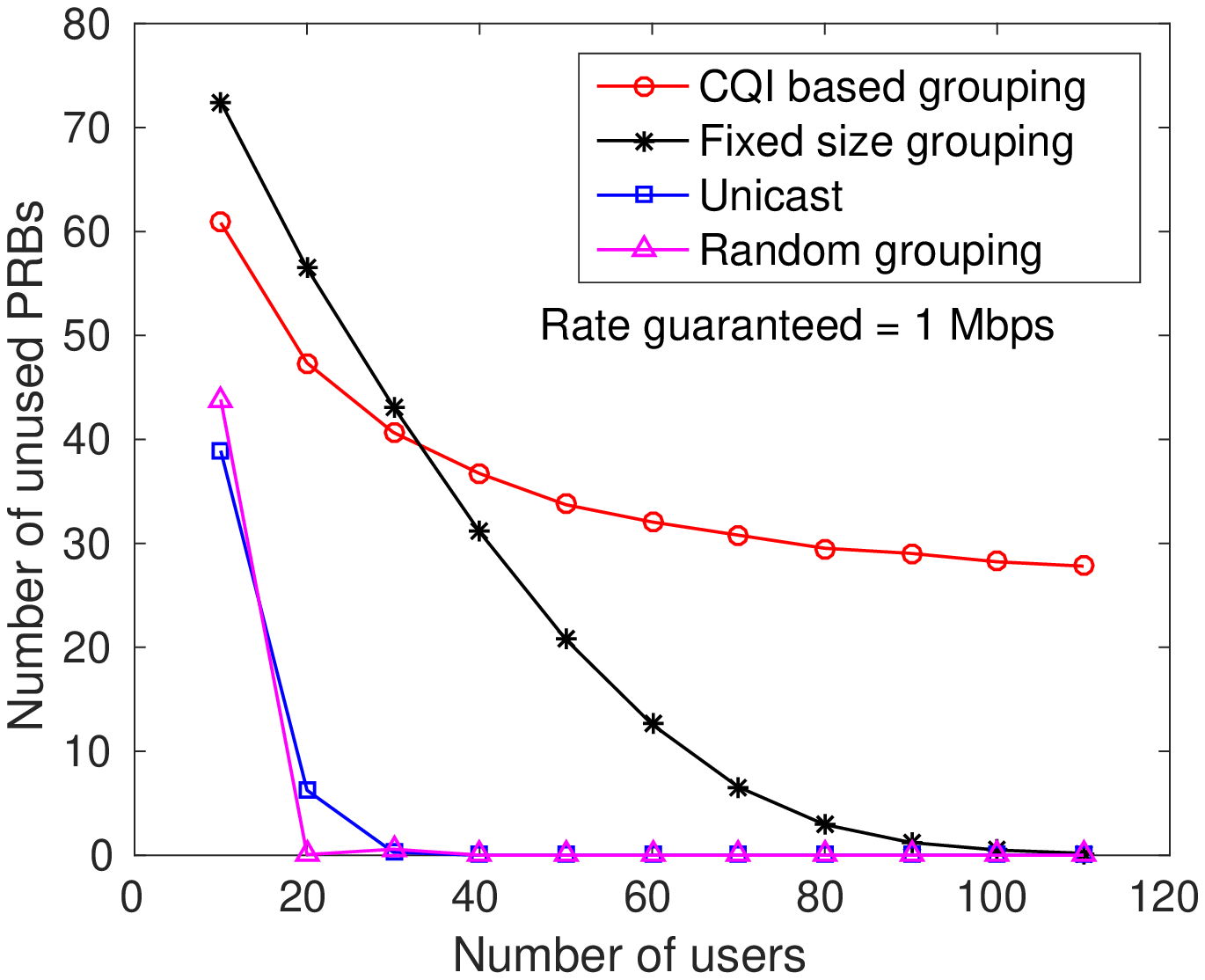} 
\caption{Variation of the number of unused PRBs under Greedy scheme with increasing number of UEs.}
\label{varying_greedy}
    \end{subfigure}%
    ~ 
    \begin{subfigure}[t]{0.45\textwidth}
        \centering
\includegraphics[scale = 0.38]{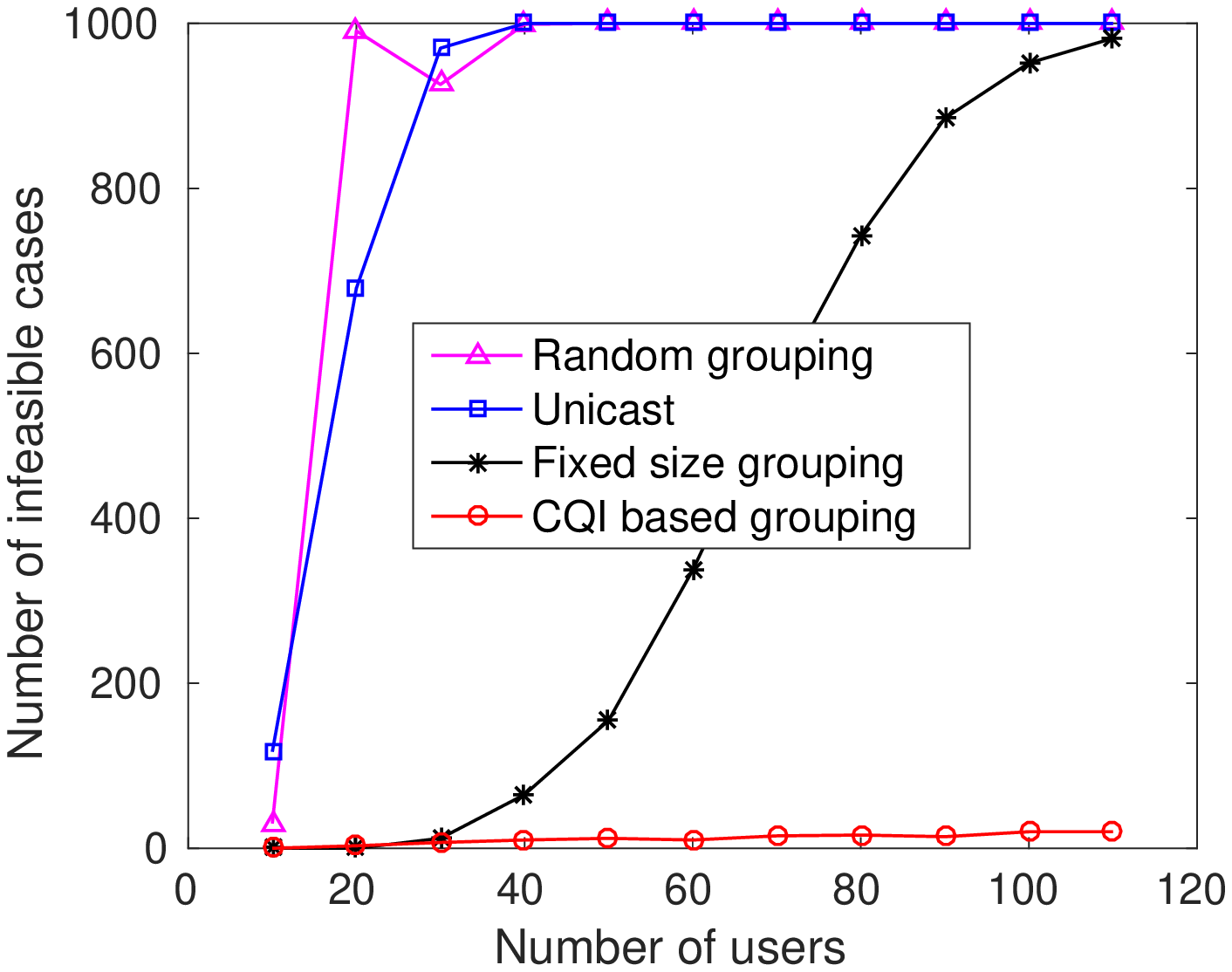}
\caption{Variation of the number of infeasible cases under Greedy scheme with increasing number of UEs.}
\label{varying_infeasible_greedy}
    \end{subfigure}%
~
\caption{Performance evaluation of the Greedy scheme}
\end{figure*}

\begin{figure*}[htb]

    \centering
    \begin{subfigure}[t]{0.45\textwidth}
        \centering
\includegraphics[scale = 0.38]{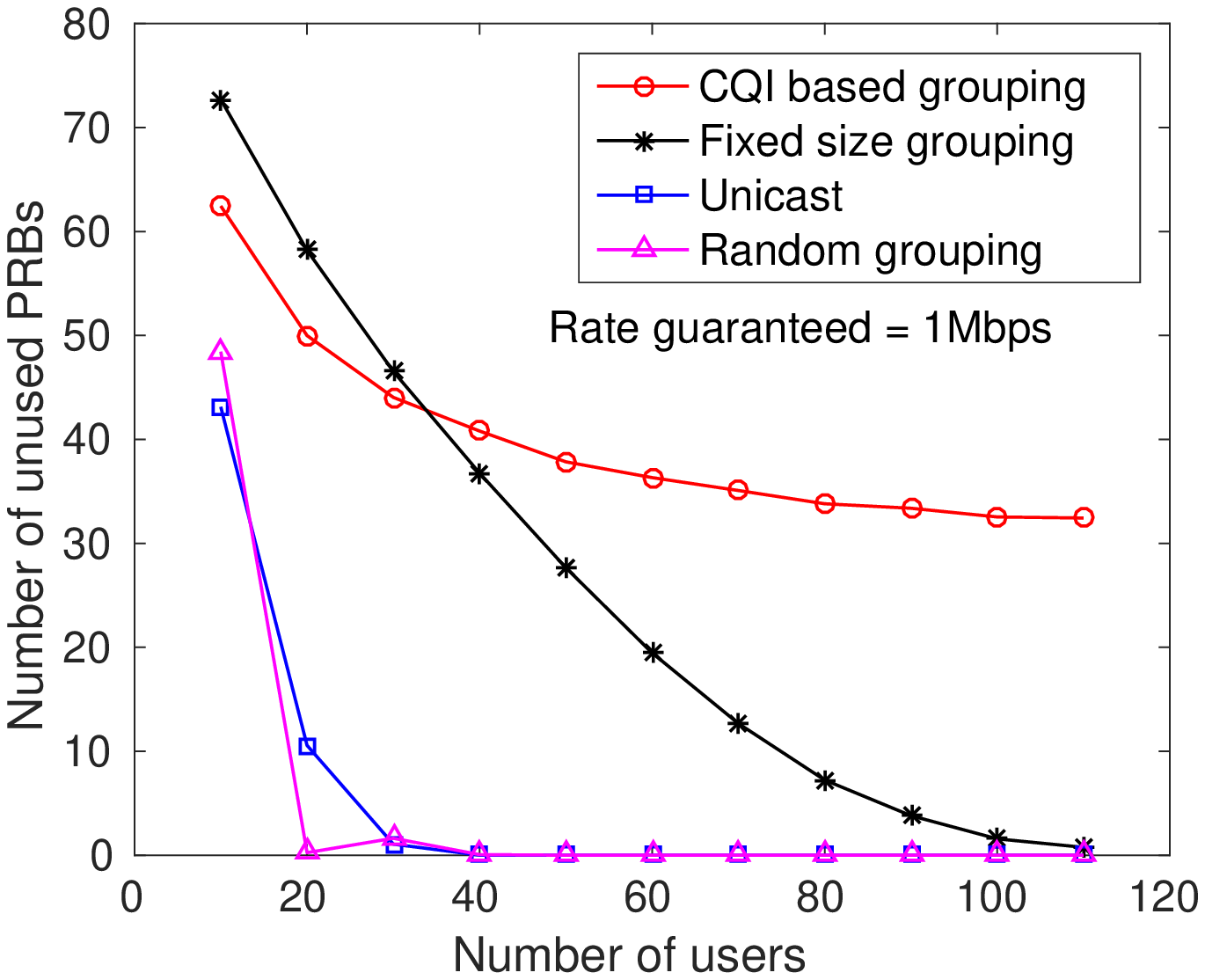} 
\caption{Variation of the number of unused PRBs under LP-relaxation based scheme with increasing number of UEs.}
\label{varying_LP}
\end{subfigure}
    ~ 
    \begin{subfigure}[t]{0.45\textwidth}
        \centering
\includegraphics[scale = 0.38]{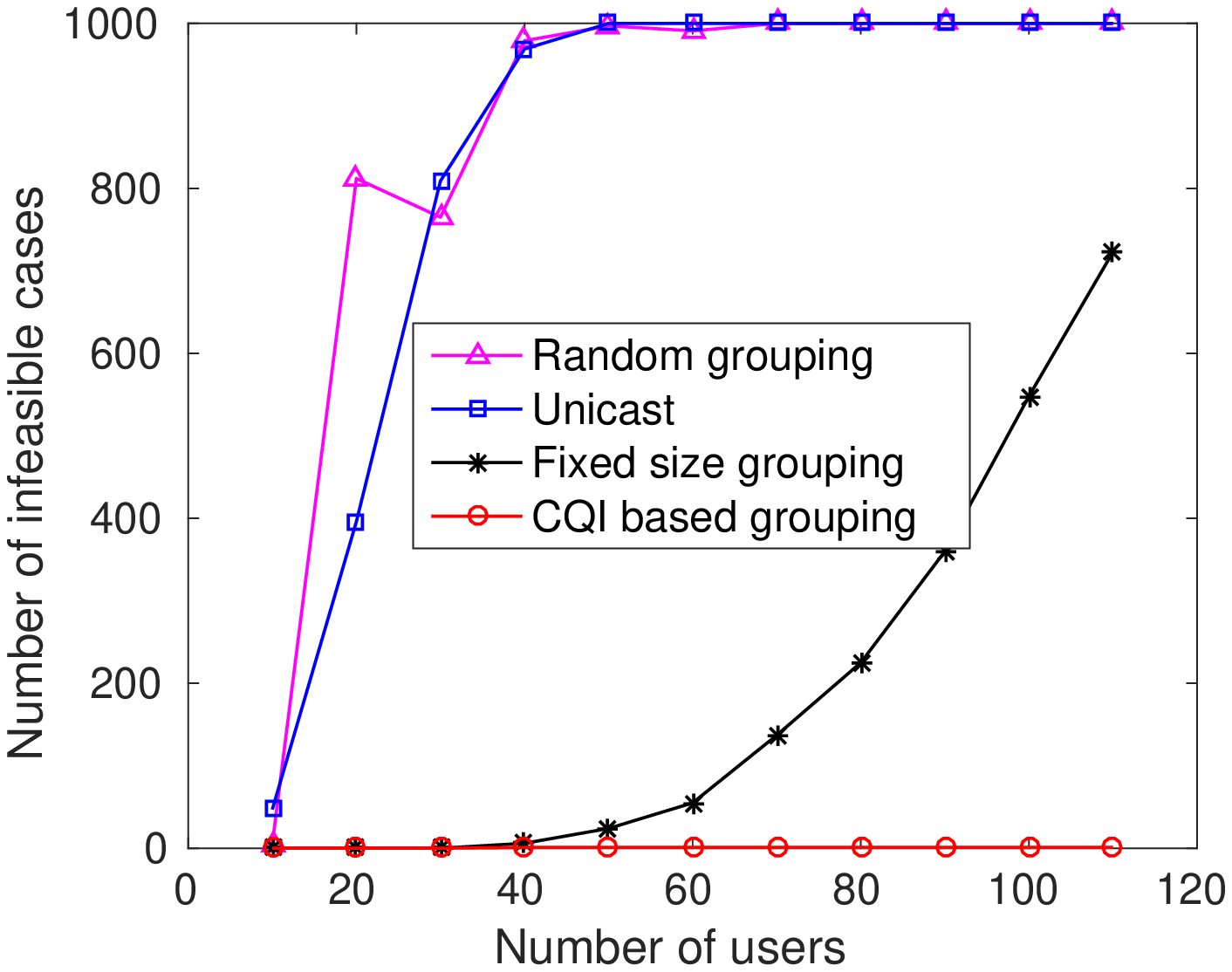}
\caption{Variation of the number of infeasible cases under LP-relaxation based scheme with increasing number of UEs.}
\label{varying_infeasible_LP}
\end{subfigure}
\caption{Performance evaluation of the LP-relaxation based scheme}

\end{figure*}

\begin{figure*}[htb]
    \centering
    \begin{subfigure}[t]{0.45\textwidth}
        \centering
\includegraphics[scale = 0.38]{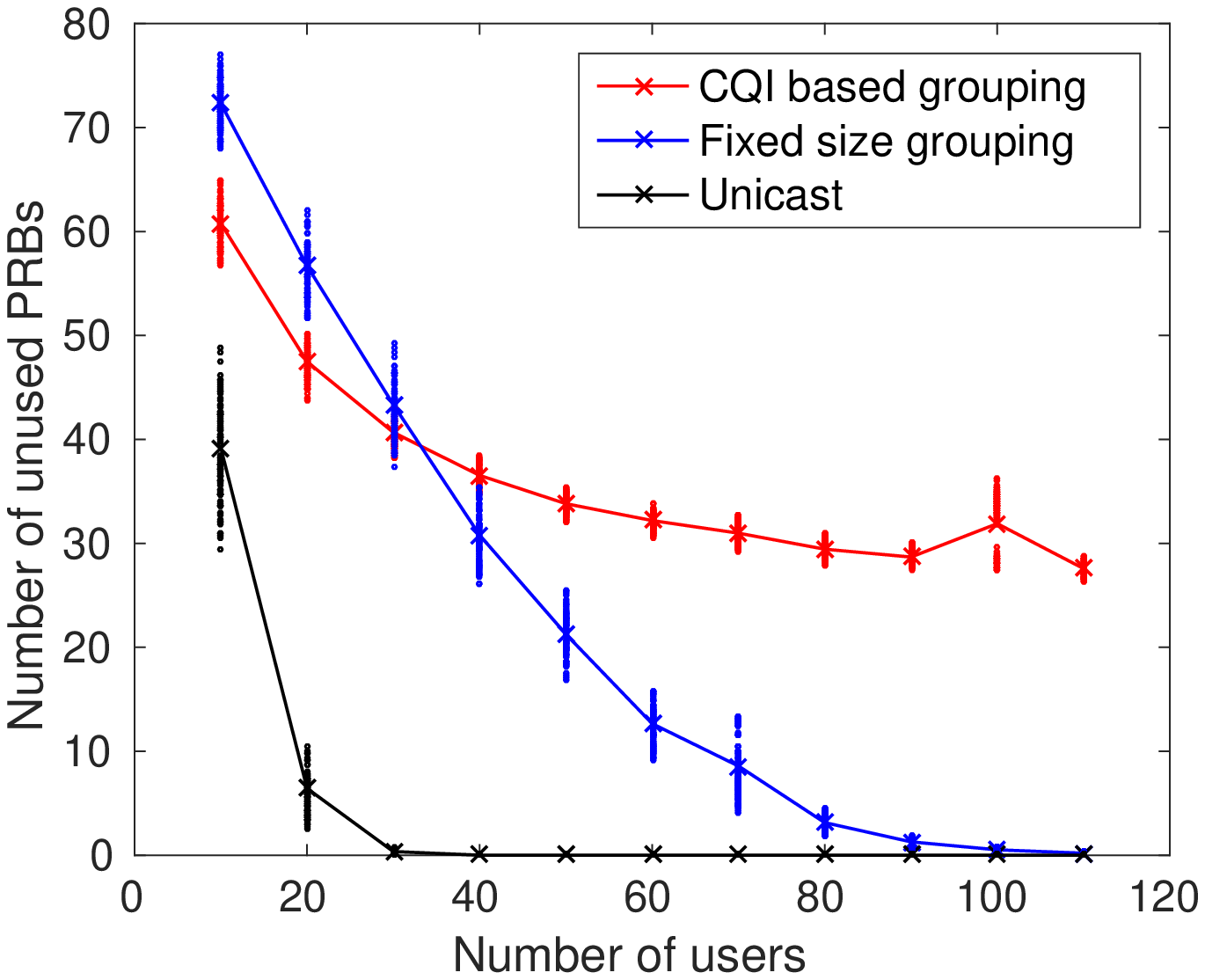} 
\caption{Number of unused PRBs with Greedy scheme for different UE placements.}
\label{greedy_mean}
\end{subfigure}
    ~ 
    \begin{subfigure}[t]{0.45\textwidth}
        \centering
\includegraphics[scale = 0.38]{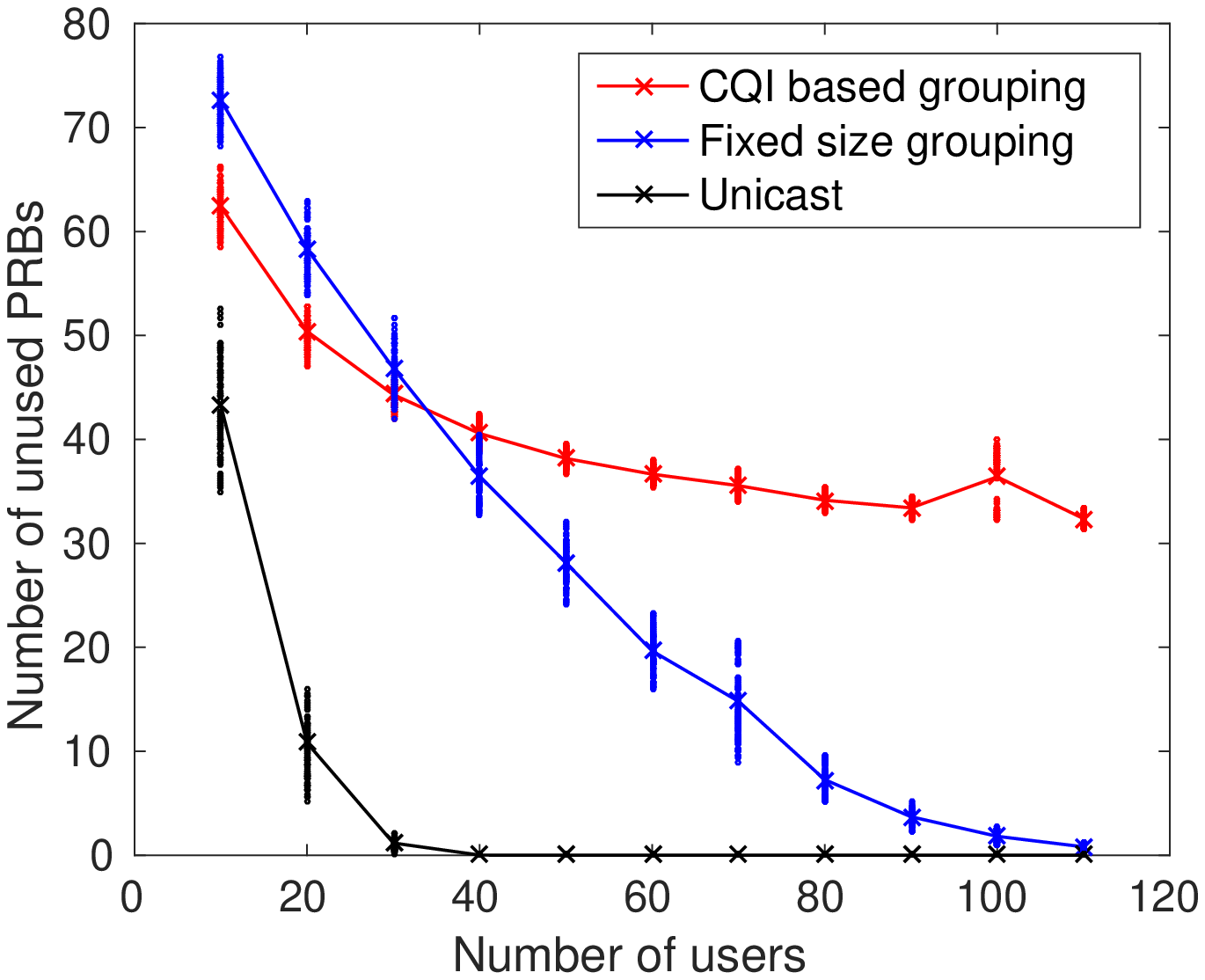}
\caption{Number of unused PRBs with LP-relaxation based scheme for different UE placements.}
\label{LP_mean}
\end{subfigure}
\caption{Performance of heuristic schemes for different UE placements.}
\end{figure*}

\begin{figure*}[htb]
    \centering
    \begin{subfigure}[t]{0.45\textwidth}
        \centering
\includegraphics[scale = 0.38]{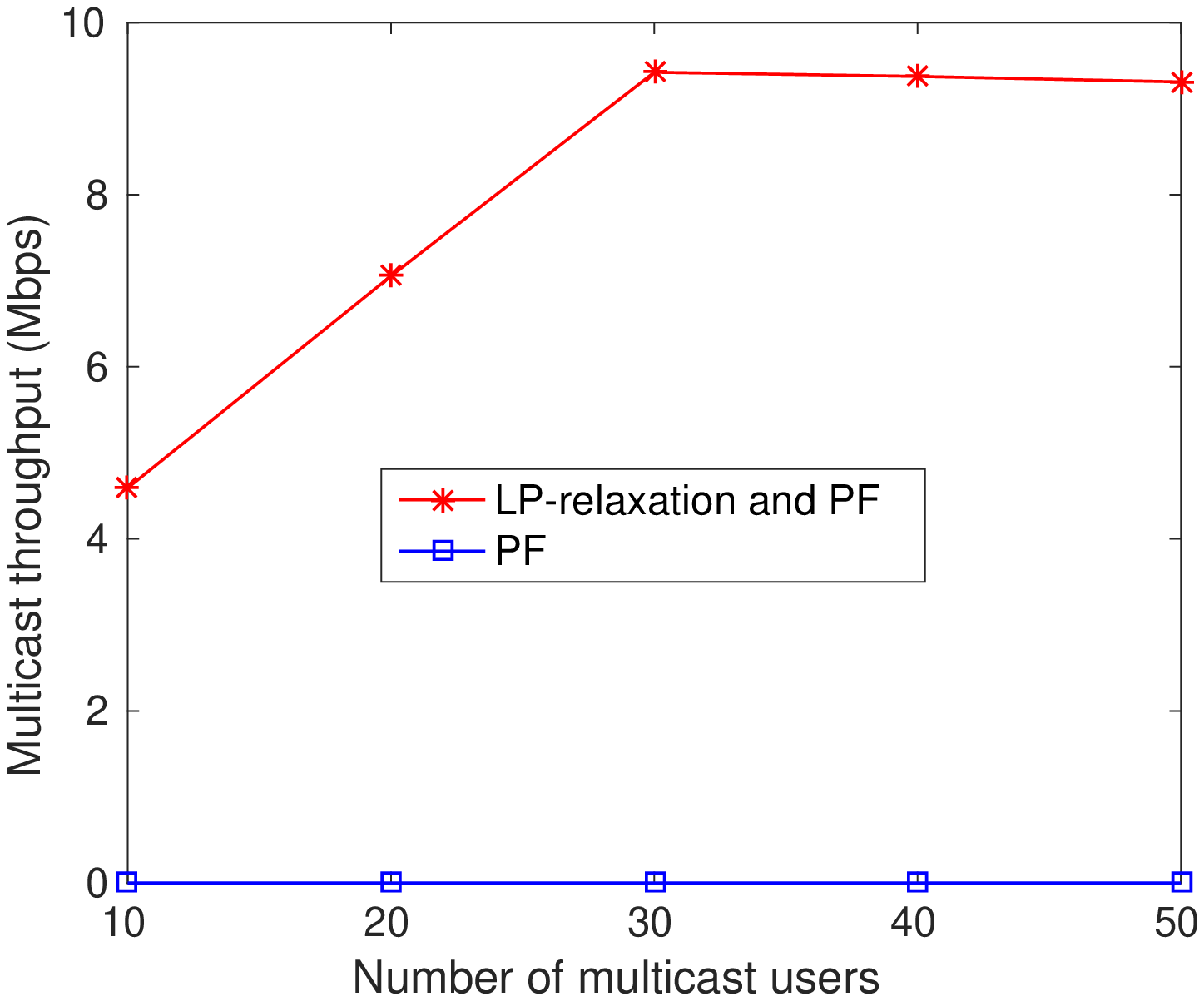} 
\caption{Average sum throughput for multicast users.}
\label{multicast_comp}
\end{subfigure}
    ~ 
    \begin{subfigure}[t]{0.45\textwidth}
        \centering
\includegraphics[scale = 0.38]{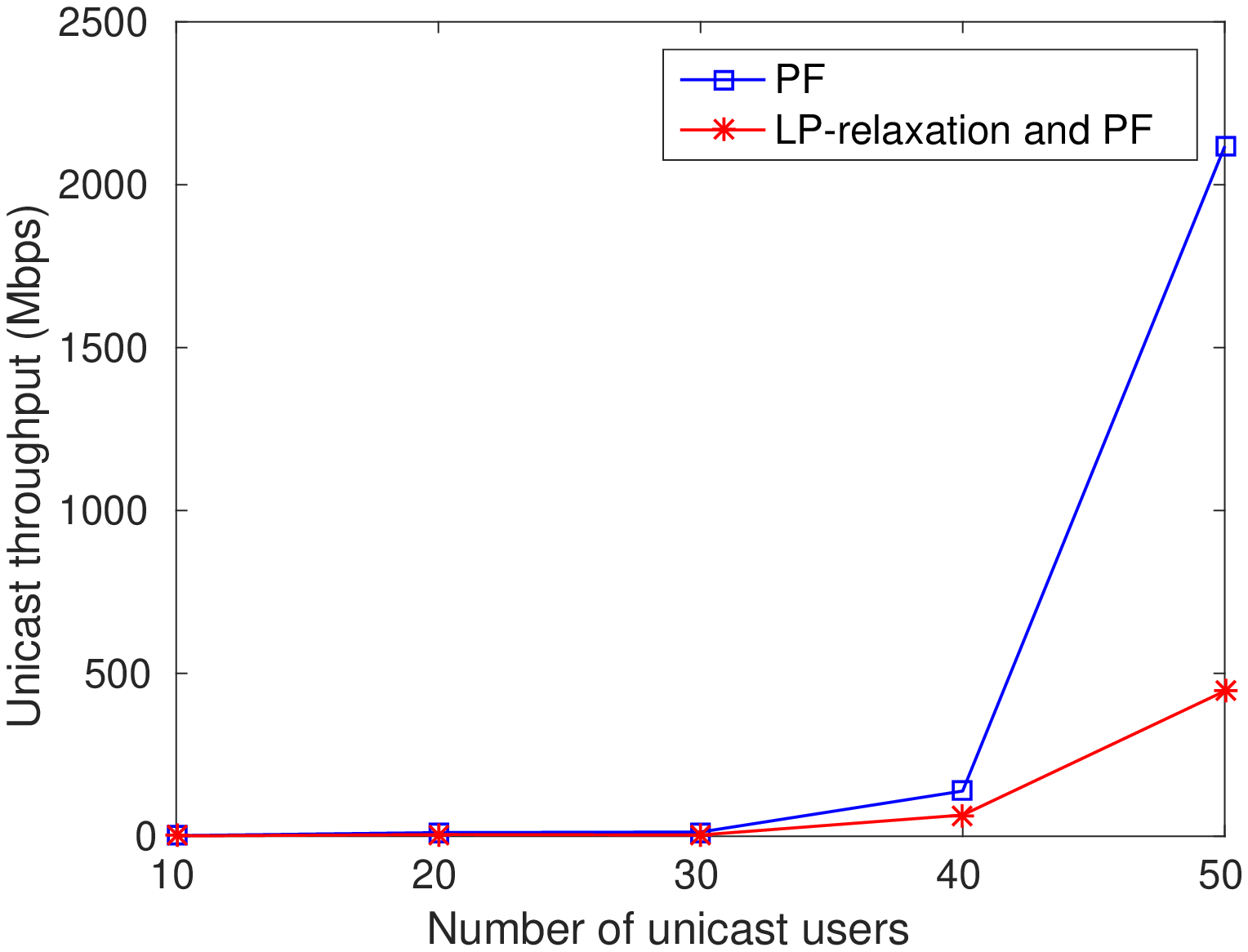}
\caption{Average sum throughput for unicast users.}
\label{unicast_comp}
\end{subfigure}
\caption{Performance comparison of LP-relaxation based scheme with PF.}
\end{figure*}

\subsection{Results}
 Fig.~\ref{varying_greedy} and Fig.~\ref{varying_LP} illustrate plots of the unutilized PRBs at the eNB against the number of UEs in the cell for greedy and LP-relaxation based
 schemes respectively. The following observations can be made from these plots:\\
\noindent 
 $\bullet$ Random grouping performs the worst. It is unable to support more than $10$ UEs.\\
 $\bullet$ The number of PRBs saved in unicast transmissions rapidly decreases to $0$ beyond $20$ UEs. Even for $20$ UEs, less than $10$ PRBs are
  saved at the eNB.\\
 $\bullet$ For fixed size grouping, greedy allocation saves greater than $10$ PRBs for up to $60$ UEs and none for
 $90$ or more UEs.
  Using LP-relaxation based allocation, more than $10$ PRBs are saved for up to $70$ UEs and none for $100$ UEs and beyond.\\
 $\bullet$ With CQI based grouping, the number of PRBs saved at the eNB is always greater than $20$ and $30$ for greedy and LP-relaxation
  based allocation respectively.
 
 Fig.~\ref{varying_infeasible_greedy} and Fig.~\ref{varying_infeasible_LP} illustrate the number of sub-frames (out of $1000$) for which the
 allocations are rendered infeasible for greedy and LP-relaxation based schemes respectively. The following can be observed from these:\\
 \noindent
 $\bullet$ Unicast transmission and random grouping quickly become completely infeasible beyond $30$ UEs for greedy allocation and beyond $40$ for LP-relaxation
  based allocation. \\ 
 $\bullet$ For the fixed size grouping and greedy allocation, up to $30$ UEs are supported with no infeasible cases. Beyond $100$ UEs,
  the allocation becomes completely infeasible. When using LP-relaxation based scheme, up to $40$ UEs are supportable without any
  infeasible cases and the allocation is never entirely infeasible for the range of the plot.\\
 $\bullet$ Number of infeasible cases for CQI based grouping is nearly zero throughout for greedy allocation and exactly zero for LP-relaxation based scheme.

 Fig.~\ref{greedy_mean} and Fig.~\ref{LP_mean} illustrate the number of PRBs saved at the eNB for different UE placements. For every $N$ number
 of UEs in the cell, $100$ different placements have been considered. Out of these, $95 \%$ closest to the mean have been plotted as
 a scatter plot. The means of the observations have also been indicated in the figures. The following conclusions can be drawn from these plots:\\
 \noindent
  $\bullet$ For varying UE placements, the number of PRBs saved at the eNB is between $\pm 5$ PRBs around the mean number of
  unused PRBs for both the grouping schemes under both resource allocation schemes.\\
  $\bullet$ Overall trend of the number of PRBs saved as the UE count increases is the same as observed in the previous plots.\par 
It can be seen from the above observations that the LP-relaxation based scheme performs better than the greedy scheme. For CQI based grouping, LP-relaxation
always saves more than $30$ PRBs on an average while for greedy, the average number of PRBs saved is less than $30$. The plots for number of infeasible cases
show that CQI based grouping always satisfies the rate requirements for all the groups in every sub-frame. Fixed size grouping can support up to $90$ UEs
while satisfying the rate requirements in at least half the sub-frames.
From these simulations, it is clear that the CQI based grouping and the LP-relaxation based scheme perform the best in grouping and resource allocation respectively. \par

In Fig.~\ref{multicast_comp} and Fig.~\ref{unicast_comp}, the average sum throughout provided by the LP-relaxation scheme is compared with that of Proportional Fair
scheme (henceforth referred to as PF) that is popularly used for resource allocation~\cite{genetic},\cite{genetic_power},\cite{mung_chiang}. For comparing the performance
of our scheme with the PF resource allocation, we make use of a scenario where both multicast and unicast UEs need to be served by the eNB. PF uses the notion of 
proportional fairness to maintain fairness among the unicast and multicast users. In order to compare our scheme with PF, we first allocate the required number of PRBs 
to the multicast groups using the proposed LP-relaxation based scheme. This is followed by a PF allocation to the unicast UEs in the cell. Fig.~\ref{multicast_comp} 
illustrates the graph of the average sum throughout achieved by the two schemes for the multicast UEs. As can be observed from the figure, even while minimizing the 
number of PRBs used, our scheme provides a significantly better throughput to the multicast UEs. The PF scheme, on the other hand, fails to fulfill the rate requirements of the multicast UEs. 
Fig.~\ref{unicast_comp} illustrates the graph of the average sum throughout achieved by the two schemes for the unicast UEs. The PF scheme provides a better throughput to 
the unicast UEs at the cost of the multicast groups. Even so, for a smaller number of UEs, our scheme performs very close to the PF scheme even after allocating sufficient 
resources to the multicast groups. \par
While PF schemes work well in a unicast only scenario, such schemes are not suitable for a rate constrained streaming scenarios that require a certain minimum rate to be 
provided to the subscribers in every sub-frame. This is clearly illustrated by the comparison results discussed in the last paragraph.

\section{Conclusions} \label{sec:conclusion}

In this paper, we have considered the problem of multicast transmission for distribution of common content to UEs in a LTE cell. For successful multicast transmission, two main challenges need to be addressed. The first challenge is that of multicast
group formation and the second challenge is that of resource allocation to the multicast groups. We have formulated the problem of resource allocation with the objective of minimizing the number of PRBs utilized while providing a certain minimum rate to all
the multicast groups. We have proved that the optimal grouping and the optimal resource allocation problems both are NP-hard and therefore, no polynomial time algorithms exist for determining their optimal solutions. NP-hardness of these problems were not
proved prior to this work. We have proposed a randomized scheme that works iteratively for estimating the optimal resource allocation. The output of the randomized scheme has been used as a benchmark for performance evaluation of the heuristic resource allocation schemes.
We have proposed two heuristic schemes for resource allocation, a greedy scheme and a LP-relaxation based scheme. We have compared the performance of the LP-relaxation based scheme to that of the randomized scheme.
The LP-relaxation scheme results in feasible resource allocation that saves nearly as many PRBs as that saved by using the randomized scheme in about one-fifth the time taken by the randomized scheme.
We have proposed two heuristic schemes for multicast group formation as well, a fixed size grouping scheme and a CQI based grouping scheme. Using extensive simulations, we have shown that using multicast transmission instead of unicast for delivering common content to a large
number of UEs simultaneously results in significant resource conservation. The proposed grouping and resource allocation schemes can act as an enhancement to MBMS in LTE. These enhancements will not only improve the performance of MBMS but will also make 
it's multicast operations more flexible and versatile.

\section{Proofs}
\subsection{Proof of Lemma~\ref{NPC1_lemma}} \label{NPC1}
The optimal resource allocation problem \Bstar was defined in Section~\ref{sec:sys_model}. Since \Bstar is an optimization problem, in order to prove that it is NP-hard, we must show the corresponding decision problem to be NP-complete.
The decision problem corresponding to \Bstar (denoted by \Cstar) is defined as follows:\par

\Cstar: Does there exist an assignment of binary variables \{$x_{ij}$\}, $i \in [L]$ and $j \in [N]$ such that (\ref{eq:rate_constr}) and (\ref{eq:matching_constr}) of \Bstar are satisfied?

\Cstar determines whether or not there exists a feasible solution of \Bstar.
In order to prove that \Bstar \ is a NP-hard problem, it is sufficient to show that \Cstar \ is NP-complete. We prove the NP-completeness of \Cstar \ by reduction from a version of the 3-partition problem (3P) defined below~\cite{partition}: \\
$\bullet$ \textbf{Input:} A set $\mathrm{Y}$, of $P=3m$ positive integers, $\{\rho_1,\rho_2, \ldots ,\rho_P\}$ such that $\frac{B}{4} < \rho_k < \frac{B}{2}$ for every $\rho_k \in \mathrm{Y}$ and $\sum_{k=1}^{P} \rho_k = mB$. \\
$\bullet$ \textbf{Problem:} Can we obtain a disjoint partition of $\mathrm{Y}$, $\{Y_1, Y_2, \ldots ,Y_m\}$ such that $\sum_{\rho_k \in Y_i}\rho_k = B$ and $|{Y_i}| = 3$ for every $Y_i, i \in \{1,2, \ldots ,m\}$ and $\bigcup_{i=1}^m Y_i = \mathrm{Y}$? \\
$\bullet$ \textbf{Output:} If the problem is feasible, the output is a suitable partition of $\mathrm{Y}$, else, the output states that the problem is infeasible. \\
The 3P problem is known to be NP-complete~\cite{partition}. Next, we show the NP-completeness of \Cstar \ by reduction from 3P.

 \begin{theorem} \label{thm:RA}
  \Cstar is a NP-complete problem.
 \end{theorem}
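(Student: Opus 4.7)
The plan is to establish NP-completeness of $\mathbf{C^\star}$ in two stages: first by showing membership in NP, then by giving a polynomial-time reduction from the 3-Partition problem $3P$ introduced just before the theorem statement.

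Membership in NP is immediate: given a candidate assignment $\{x_{ij}\}$, checking the rate constraint~(\ref{eq:rate_constr}) for each group and the matching constraint~(\ref{eq:matching_constr}) for each PRB is clearly polynomial in $LN$, and the certificate has size $O(LN)$.

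For the reduction, given any instance of $3P$ with integers $\rho_1, \ldots, \rho_P$ where $P = 3m$, bound $B$, and $\sum_k \rho_k = mB$, I would construct the following instance of $\mathbf{C^\star}$: take $L = m$ multicast groups, $N = P = 3m$ PRBs, rate requirement $R = B$, and define the per-PRB rates by $r_{ij} = \rho_j$ for every group $i \in [L]$ and PRB $j \in [N]$. This construction is clearly polynomial in the input size of $3P$. The key step is then to argue the equivalence of the two instances. In the forward direction, a valid 3-partition $\{Y_1, \ldots, Y_m\}$ yields an allocation by setting $x_{ij} = 1$ precisely when $\rho_j \in Y_i$; disjointness of the $Y_i$'s gives~(\ref{eq:matching_constr}) and $\sum_{\rho_j \in Y_i} \rho_j = B$ gives~(\ref{eq:rate_constr}). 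In the reverse direction, suppose $\mathbf{C^\star}$ is feasible. Summing~(\ref{eq:rate_constr}) over all $i \in [L]$ gives $\sum_i \sum_j x_{ij}\rho_j \geq mB$, while $\sum_j (\sum_i x_{ij})\rho_j \leq \sum_j \rho_j = mB$ using~(\ref{eq:matching_constr}); equality must hold throughout, so every PRB is used exactly once and each group achieves total rate exactly $B$.

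The subtle part, and what I expect to be the main obstacle to spell out cleanly, is showing that each group must receive exactly three PRBs under any feasible allocation; this is precisely where the peculiar constraint $B/4 < \rho_k < B/2$ in $3P$ becomes essential. If any group $i$ were assigned only one or two PRBs, its total rate would be strictly less than $2 \cdot B/2 = B$, violating~(\ref{eq:rate_constr}); if it were assigned four or more, its total rate would exceed $4 \cdot B/4 = B$, so some other group would fail the rate constraint since the sum equals $mB$ exactly. Hence the induced partition $Y_i = \{\rho_j : x_{ij} = 1\}$ has $|Y_i| = 3$ and $\sum_{\rho_j \in Y_i}\rho_j = B$ for every $i$, which is a valid 3-partition of $Y$. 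Combining the two directions, $\mathbf{C^\star}$ is NP-complete, and consequently $\mathbf{B_\Delta^\star}$ is NP-hard as claimed in Lemma~\ref{NPC1_lemma}.
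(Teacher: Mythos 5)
Your proposal is correct and follows essentially the same route as the paper: membership in NP by certificate checking, followed by the identical reduction from 3P ($L=m$, $N=P$, $R=B$, $r_{ij}=\rho_j$) and verification of both directions of the equivalence, including the use of $B/4<\rho_k<B/2$ to force $|Y_i|=3$. Your global summing argument ($mB \le \sum_i\sum_j x_{ij}\rho_j \le \sum_j \rho_j = mB$, hence equality throughout) is a slightly tidier way of establishing that the allocation induces an exact partition than the paper's case-by-case contradictions, but it is the same underlying argument.
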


 \begin{proof}
 In order to prove that \Cstar is NP-complete, we first need to show that \Cstar belongs to the class NP. Given a certificate for \Cstar, we can verify in polynomial time whether or not it is a solution by checking if it
  satisfies the requirements stated in constraints (\ref{eq:rate_constr}) and (\ref{eq:matching_constr}) of \Bstar. This can be done in $O(LN)$ computations. Therefore, \Cstar $\in $ NP.\par 
  
  Having proved that \Cstar $\in$ NP, we now need to reduce 3P to an instance of \Cstar in polynomial time. The pseudo-code for the algorithm used for the said reduction is presented in Algorithm~\ref{algo:reductionRA}.
Note that, to define an instance of \Cstar, we need to state the number of groups, number of available PRBs, rate requirement of groups $(R)$ and the rates that can be achieved by the groups
in every PRB. These are defined in lines~\ref{line:groups} through~\ref{line:rate_groups} of Algorithm~\ref{algo:reductionRA} respectively. The reduction in Algorithm~\ref{algo:reductionRA} can be accomplished in $O(N)$ computations.\par
We now show that a solution for \Bstar gives us a solution for 3P as well.
 Assume that there exists a polynomial time algorithm for solving \Bstar. If we try to solve \Bstar using this algorithm, it will either give us a feasible solution or tell us that \Bstar is infeasible. We will now show how each of these outputs can be mapped
 to a corresponding solution for 3P.\par
 
  \begin{algorithm}
	\KwIn{3-partition problem with set $\mathrm{Y}$, of $P=3m$ positive integers, $\{\rho_1,\rho_2, \ldots ,\rho_P\}$ such that $\frac{B}{4} < \rho_k < \frac{B}{2} \ \forall \ \rho_k \in \mathrm{Y}$ and $\sum_{k=1}^{P} \rho_k = mB$}
	\KwOut{An instance of \Cstar with}
	
		 $L \leftarrow m$\\ \label{line:groups}
		 $N \leftarrow P$\\ \label{line:prbs}
		 $R \leftarrow B$\\ \label{line:rate}
		 $r_{ik} = r_k \leftarrow \rho_k \ \forall \ k \in \{1,2, \ldots ,P\} \ , \ i \in \{1,2, \ldots ,m\}$\\ \label{line:rate_groups}
	
	\caption{Pseudo-code for reducing 3P to \Cstar}
	\label{algo:reductionRA}
\end{algorithm} 
 
 Say that the algorithm gives us a feasible solution for \Bstar. Let the feasible solution be a matrix of binary values $[\tilde{x}_{ij}]_{i,j}$ for $i \in [L]$ and $j \in [N]$. 
 The corresponding solution for 3P can be obtained from this solution in polynomial time as follows:\par
 For every $i \in [m]$, \ $Y_i = \{\rho_j : \tilde{x}_{ij} = 1\}$. \\
 The solution thus obtained gives us a feasible solution for 3P as well. To prove this, we need to prove that:\\
 \noindent
  $\bullet$ The solution results in a disjoint partition of $\mathrm{Y}$, $\{Y_1,Y_2, \ldots ,Y_m\}$.\\
    $\bullet$ $\sum_{k \in Y_i} \rho_k = B$, for every $i$. \\
  $\bullet$ $|Y_i| = 3$ for every $i$.\\
  We shall prove these by contradiction as follows:
 \begin{enumerate}[wide = 0pt]
  \item Let's first show that the resulting solution is a disjoint partition on $\mathrm{Y}$. Suppose not. Then, one of the following two things must happen:
  \begin{enumerate}[wide = 0pt]
   \item \label{a} there exists $Y_i$ and $Y_k$ such that $Y_i \cap Y_k \neq \phi$ or,
   \item \label{b} there exists some $k$ such that $\rho_k \notin \bigcup\limits_i Y_i$.
  \end{enumerate}

  If \ref{a} is true and there exist $Y_i$ and $Y_k$ such that $Y_i \cap Y_k \neq \phi$, it means that:
  \begin{eqnarray*}
   &\exists \ j \in [P] \ \text{such that}, \tilde{x}_{ij} = 1 \ \text{and} \ \tilde{x}_{kj} = 1,\\
   &\implies \sum_l \tilde{x}_{lj} \geq 2,
  \end{eqnarray*}
which violates constraint~(\ref{eq:matching_constr}) of \Bstar. This means that $[\tilde{x}_{ij}]_{i,j}$ is not a feasible solution of \Bstar which is a contradiction.
Therefore, $Y_i \cap Y_k = \phi$ for every $i$ and $k \in [m]$. \par  
  If \ref{b} is true and there exists $k \in [P]$, such that $\rho_k \notin \bigcup_i Y_i$, it means 
  that $\tilde{x}_{ik} = 0$ for every $i$. But, we have a feasible solution of \Cstar which guarantees that the rate requirement of every group is satisfied. So,
  \begin{eqnarray*}
  &\sum_{k \in Y_i} \rho_k \geq B, \ \forall \ i \in [m], \\ 
  &\implies \sum_{j=1,j\neq k}^P \rho_j \geq mB, \implies \sum_{j = 1}^{P} \rho_j > mB,
  \end{eqnarray*}
  which is a contradiction. Hence,~\ref{b} cannot be true. Hence, the resulting solution will be a partition on $\mathrm{Y}$.
  
  \item We now show that $\sum_{k \in Y_i} \rho_k = B$, for every $i$. Suppose not. Since $[\tilde{x}_{ij}]_{i,j}$ is a feasible solution of \Bstar, we will have, $\sum_{k \in Y_i} \rho_k \geq B$, for every $i \in [m]$. Let's say that at least one of these is a strict inequality. That is,
  there exists $l \in [m]$ such that $\sum_{k \in Y_l} \rho_k > B$. This implies that $\sum_{i = 1}^{P} \rho_i > mB$,
which is a contradiction. Therefore, we will have $\sum_{k \in Y_i} \rho_k = B$, for every $i$.
  
  \item Next, we prove that $|{Y_i}| = 3$ for every $Y_i$. Let's suppose, for the sake of contradiction, that one subset, $Y_k$ has less than $3$ elements. Since the rate requirement of every group is $B$, we have, $\sum_{\rho_i \in Y_k} \rho_i \geq B$. Also, from the problem definition of 3P, we have,
  $\rho_i < \frac{B}{2}$. Since $Y_k$ can have a maximum of $2$ members, we get, $\sum_{\rho_i \in Y_k} \rho_i < B$ which is in contradiction to $\sum_{\rho_i \in Y_k} \rho_i \geq B$ above. Thus, $Y_k$ cannot have less than $3$ elements. 
  Therefore, $|{Y_i}| = 3$ for every $Y_i, \ i \in [m]$.
 \end{enumerate}
 We have now established that a feasible solution for \Bstar gives us a feasible solution for 3P as well. All that is left to complete the proof, is to show that, if \Bstar turns out to be infeasible, then, 3P has to be infeasible as well. We prove this by contradiction
 as follows: \par
 Let's assume that 3P has a feasible solution even when \Bstar is infeasible. This means that, there exists a disjoint partition of $\mathrm{Y}$, $\{Y_1, \ldots ,Y_m\}$ such that, $\sum_{\rho_k \in Y_i}\rho_k = B$ and $|{Y_i}| = 3$ for every $Y_i, \ i \in \{1,2, \ldots ,m\}$.
 This solution can be mapped to a corresponding solution for \Bstar as follows:

\begin{displaymath}
x_{ij} = \left \{ \begin{array}{ll}
1, &  \text{if} \ \rho_j \in Y_i, \\
0, & \ \text{otherwise}. \\
\end{array}
\right.
\end{displaymath}
So, for every $i$, we have:
\begin{eqnarray*}
 &\sum_{j=1}^{N} x_{ij} r_{ij} = \sum\limits_{\rho_j \in Y_i} r_j, \\
 &\implies \sum_{j=1}^{N} x_{ij} r_{ij} = \sum\limits_{\rho_j \in Y_i} \rho_j = B = R.
\end{eqnarray*}
Also, since $Y_i$'s form a disjoint partition of $\mathrm{Y}$, we will have, $\sum_{i=1}^{N} x_{ij} \leq 1$ for every $j$.
This means that $[x_{ij}]_{i,j}$ is a feasible solution for \Bstar which is a contradiction. Therefore, 3P has to be infeasible every time \Bstar is infeasible.\\ 
 Thus, a polynomial time solution for \Cstar results in a polynomial time solution for 3P as well which is not possible unless P $=$ NP. Therefore, there is no polynomial time algorithm for solving the optimal resource allocation problem $\implies$ \Cstar is 
 a NP-complete problem.
 \end{proof}

\begin{corollary}
 \Bstar is a NP-hard problem.
\end{corollary}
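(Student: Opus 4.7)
The plan is to invoke the standard optimization-to-decision reduction. Since Theorem~\ref{thm:RA} has already established that the decision problem \Cstar is NP-complete, and \Cstar asks precisely whether the constraint system underlying \Bstar admits a feasible point, any hypothetical polynomial-time algorithm for the minimization problem \Bstar would immediately yield a polynomial-time decision procedure for \Cstar, forcing P $=$ NP.

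First, I would observe that \Cstar and \Bstar are defined over exactly the same binary variables $\{x_{ij}\}$ and over the same two constraint families (\ref{eq:rate_constr}) and (\ref{eq:matching_constr}); the only distinction is that \Bstar additionally carries the objective $\min \sum_{i,j} x_{ij}[t]$. Consequently, the feasible region of \Bstar is non-empty if and only if the answer to the corresponding instance of \Cstar is YES, and any optimal solution of \Bstar is in particular a YES-certificate for \Cstar.

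Next, suppose for contradiction that some algorithm $A$ solves \Bstar in time polynomial in $L$ and $N$. Given any instance of \Cstar, I would feed the identical data $(L, N, R, \{r_{ij}\})$ to $A$. If $A$ returns a feasible assignment (equivalently, a finite optimal objective value), output YES for \Cstar; if $A$ reports that \Bstar admits no feasible solution, output NO. This produces a polynomial-time decision procedure for \Cstar, contradicting Theorem~\ref{thm:RA} unless P $=$ NP. Hence no such $A$ exists and \Bstar is NP-hard.

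There is no real mathematical obstacle here, since this is the textbook reduction from the decision version of an integer program to its optimization version. The only subtlety worth stating explicitly in the final write-up is how $A$ is assumed to signal infeasibility, which can be handled uniformly by requiring $A$ to output either a feasible optimal assignment or the symbol ``infeasible''; this does not affect the polynomial running time.
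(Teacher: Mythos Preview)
Your argument is correct and matches the paper's approach: the paper's proof is the one-line observation that since the decision version \Cstar\ is NP-complete by Theorem~\ref{thm:RA}, the optimization problem \Bstar\ is NP-hard. Your write-up simply spells out the underlying optimization-to-decision reduction in more detail than the paper does.
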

\begin{proof}
 The proof follows from Theorem~\ref{thm:RA}. Since the decision version of \Bstar is NP-complete, \Bstar is a NP-hard problem.
\end{proof}


\subsection{Proof of Lemma~\ref{NPC2_lemma}} \label{NPC2}
The optimal grouping problem \Dstar was defined in Section~\ref{sec:sys_model}.
Before addressing the hardness of the optimal grouping problem, we wish to point out that, given a grouping policy, $\Delta$, calculating $\overline{S}^\Delta$ in polynomial time may itself be hard.
Computing $\overline{S}^\Delta$ is non-trivial even when the channels are independent across UEs.
Hence, in this section, we shall prove the NP-hardness result assuming a Genie that can compute $\overline{S}^\Delta$ in polynomial time for any grouping policy $\Delta$. Thus, Genie is a map, $f_S : \Delta \rightarrow \mathbb{R}^+$.
However, on account of possible time and location dependent variations in channels which may induce arbitrary correlation among UE channel gains, we do not assume any specific structure on $f_S$. Thus, $f_S$ can be any arbitrary function that can be evaluated in
polynomial time.\\
We prove that \Dstar is NP-hard even when we are required to divide the UEs into just $2$ multicast groups, $G_1$ and $G_2$.
This simplified version of \Dstar that we use for proving it's NP-hardness will be referred to as \Estar and is stated below:\\
\Estar : Given a function $f_S : \Delta \rightarrow \overline{S}^{\Delta}$, determine $\{G_1^\star,G_2^\star\}$ such that $G_1^\star \cup G_2^\star = [M]$,
$G_1^\star \cap G_2^\star = \phi$ and $f_S\left( G_1^\star,G_2^\star \right) \geq f_S\left( G_1,G_2 \right)$ for every $\{G_1,G_2\}$.\par
We will prove the NP-hardness of \Estar by reduction from the Boolean Satisfiability problem (SAT). SAT was the first known NP-complete problem~\cite{clrs} and is 
defined as follows~\cite{clrs}:\\
$\bullet$ \textbf{Input:} SAT takes as input a boolean formula, $B$  with $n$ variables, $\{x_1,x_2, \ldots ,x_n\}$.\\
$\bullet$ \textbf{Problem:} Is there a consistent assignment for the variables, $\{x_1,x_2, \ldots ,x_n\}$ in $B$ such that it evaluates to TRUE? \\
$\bullet$ \textbf{Output:} If the problem is feasible, the output is a consistent assignment of binary variables $\{x_1,x_2, \ldots ,x_n\}$ that makes $B$ evaluate to TRUE.
 Otherwise, the output states that the problem is infeasible.\\
Next, we show the NP-hardness of \Estar by reduction from SAT.

\begin{theorem} \label{thm:grouping}
  \Estar \ is a NP-hard problem.
 \end{theorem}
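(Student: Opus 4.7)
The plan is to reduce from the Boolean Satisfiability problem (SAT), which is known to be NP-complete. Given a SAT instance with a Boolean formula $B$ over $n$ variables $x_1,\ldots,x_n$, I would construct an instance of \Estar with $M = n$ UEs by identifying UE $i$ with variable $x_i$. Every partition $\{G_1,G_2\}$ of $[M]$ then naturally encodes a truth assignment: interpret $i \in G_1$ as $x_i = \mathrm{TRUE}$ and $i \in G_2$ as $x_i = \mathrm{FALSE}$. This encoding is a bijection between partitions of $[M]$ and truth assignments (modulo the symmetry between the two groups, which I address below).

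Next I would define the Genie mapping $f_S$ so that it encodes satisfiability. Concretely, set $f_S(G_1,G_2) = 1$ whenever at least one of the two assignments induced by $\{G_1,G_2\}$ (one by taking $G_1$ as TRUE and one by taking $G_2$ as TRUE) makes $B$ evaluate to TRUE, and $f_S(G_1,G_2) = 0$ otherwise. Using the symmetric version of $f_S$ sidesteps the fact that partitions are unordered while truth assignments are ordered. Crucially, evaluating $f_S$ on a given partition requires only two passes over $B$, each polynomial in the size of $B$, so the Genie map meets the assumption of being computable in polynomial time. The reduction itself is also polynomial: producing the instance (with $M=n$) and specifying $f_S$ by the evaluation procedure above is straightforward.

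With this construction, a maximizer $\{G_1^\star,G_2^\star\}$ of $f_S$ achieves value $1$ if and only if $B$ is satisfiable, and in that case the partition immediately gives a satisfying assignment. Therefore, any polynomial-time algorithm for \Estar would yield a polynomial-time algorithm for SAT, proving \Estar is NP-hard. Since \Estar is a special case of \Dstar (restricting to $L=2$), NP-hardness of \Dstar for arbitrary $\Gamma$ follows as a corollary.

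The main obstacle I anticipate is justifying that this contrived $f_S$ is legitimate under the statement of the lemma. The lemma explicitly allows $f_S$ to be an arbitrary positive-valued function that the Genie can evaluate, and the preceding discussion already motivates this by pointing out that correlated and time-varying channels can induce essentially arbitrary dependence of $\overline{S}_\Gamma^\Delta$ on $\Delta$. So I would make explicit that (i) the symmetry in $f_S$ is natural because $\{G_1,G_2\}$ is an unordered partition, (ii) $f_S$ takes values in $\{0,1\} \subset \mathbb{R}^+$ (or, to strictly enforce positivity, shift to $\{1,2\}$), and (iii) the evaluation cost of $f_S$ is polynomial in the instance size. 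Once these points are nailed down, the reduction is essentially immediate and the bulk of the work is notational rather than conceptual.
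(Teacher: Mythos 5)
Your proposal is correct and follows essentially the same route as the paper: a reduction from SAT with UE $i$ identified with variable $x_i$ and the Genie map $f_S$ defined as (an affine shift of) the evaluation of $B$ under the truth assignment induced by the partition, so that maximizing $f_S$ decides satisfiability. The only differences are cosmetic — the paper uses $f_S = 3 + (\text{evaluation of }B)$ rather than a $\{0,1\}$ indicator, and treats the two groups as labeled rather than symmetrizing over the two orientations.
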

 
 \begin{proof}
 We prove that \Estar is NP-hard by reducing SAT to an instance of \Estar.
The pseudo-code for the algorithm used for this reduction is presented in Algorithm~\ref{algo:reductionG}.
The reduction can be accomplished in $O(N^2)$ computations.
  We define the total number of multicast UEs to be $n$. The $i^\th$ UE in \Estar maps to the variable $x_i$ in SAT and $f_S$ is defined as $3 \ + \ $(the evaluation of $B$). 
  For calculating $f_S$, a TRUTH (T) evaluation of B corresponds to $1$ and a FALSE (F) evaluation equates to $0$.\par
 
 Let us now assume that there exists a polynomial time algorithm for solving \Estar. If we try to solve \Estar \ using this algorithm, it will either output a
 grouping with $f_S = 4$ or one with $f_S = 3$. Let's denote the output of the algorithm as $\{\tilde{G}_1,\tilde{G}_2\}$. We will now show how to map each of the possible outputs to a solution for SAT in polynomial time.
 \begin{enumerate}[wide = 0pt]
  \item $\bf{f_S = 4:}$ If the algorithm gives a grouping, $\{\tilde{G}_1,\tilde{G}_2\}$ with $f_S = 4$, it means that SAT is feasible. The feasible solution for SAT can be obtained as follows:
 \begin{displaymath}
x_i = \left \{ \begin{array}{ll}
\text{T}, &  \text{if} \ i^\th \text{UE} \in \tilde{G}_1 \\
\text{F}, & \text{if} \ i^\th \text{UE} \in \tilde{G}_2. \\
\end{array}
\right.
\end{displaymath}
 
To show that it's a feasible solution for SAT, we need to show the following two things: \\
$\bullet$ The assignments of $x_i$'s thus obtained are consistent and \\
$\bullet$ This assignment of $x_i$'s makes $B$ TRUE.\\
We shall prove these as follows:
\begin{enumerate}[wide = 0pt]
 \item We first prove that the output makes $B$ TRUE. Since $f_S = 4$, the evaluation of $B = 1 (\text{TRUE})$.
 \item Now we show that values assigned to $x_i$'s are consistent. Suppose not. Then, there exists some $x_k$ such that, $x_k =$ TRUE and $\bar{x}_k =$ TRUE. This means that:
 \begin{eqnarray*}
  &k^\th \ \text{UE} \ \in \tilde{G}_1 \ \text{and} \ k^\th \ \text{UE} \ \in \tilde{G}_2, \\
  &\implies \tilde{G}_1 \cap \tilde{G}_2 \neq \phi,
 \end{eqnarray*}
which is a contradiction. Therefore, the assignment of variables has to be consistent.
\end{enumerate}
Thus, a solution of \Estar with $f_S = 4$ gives a feasible solution for SAT.
\item $\bf{f_S = 3:}$ If the algorithm for \Estar \ gives a solution with $f_S = 3$, it means that SAT is infeasible, i.e. there is no consistent assignment of the variables in $B$ that can make $B$ evaluate to TRUE.
Suppose that's not true.
Say that SAT does have a feasible solution even though the solution for \Estar \ gave us $f_S = 3$. The solution of SAT can be mapped to a solution for \Estar \ as follows:
\begin{eqnarray*}
 &G_1 = \{i^\th \text{UE} : x_i = \text{T}\}, \\
 &G_2 = \{i^\th \text{UE} : x_i = \text{F}\}.
\end{eqnarray*}
Since the assignments of $x_i$'s are consistent, we have, $G_1 \cap G_2 = \phi$. Also,
\begin{eqnarray*}
 &G_1 \cup G_2 = \{i^\th \ \text{UE} \ : \{x_i = \text{T}\} \cup \{x_i = \text{F}\}\},\\
 &\implies G_1 \cup G_2 = [M].
\end{eqnarray*}
This means that the solution thus obtained is a feasible solution for \Estar \ with $f_S(G_1,G_2) = 4$ $\implies$ $f_S(G_1,G_2) > f_S(\tilde{G}_1,\tilde{G}_2)$ which is a contradiction. Therefore, when $f_S = 3$, SAT has to be infeasible.
 \end{enumerate}
  Thus, a polynomial time solution for \Estar results in a polynomial time solution for SAT as well which is not possible unless P = NP. Therefore, there is no polynomial time algorithm for solving
  \Estar i.e. \Estar is a NP-hard problem.
\end{proof}
 
   \begin{algorithm}
	\KwIn{Boolean Satisfiability problem with a boolean formula, $B$ of $n$ variables, $\{x_1,x_2, \ldots ,x_n\}$}
	\KwOut{An instance of \Estar}
	
		$M \leftarrow n$\\ \label{line:M}
		$i^\th \ \text{UE} \leftarrow x_i$\\ \label{line:users}
		$f_S \leftarrow 3 + (\text{the evaluation of} \ B)$ \label{line:f}
	
	\caption{Pseudo-code for reducing SAT to \Estar}
	\label{algo:reductionG}
\end{algorithm} 
 
 \begin{corollary}
  \Dstar is a NP-hard problem.
 \end{corollary}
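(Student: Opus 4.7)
The plan is to argue that \Estar is, in a suitable sense, a restriction of \Dstar, so a polynomial-time algorithm for \Dstar would immediately yield one for \Estar, contradicting Theorem~\ref{thm:grouping}. The reduction is essentially trivial once the right extension of the Genie function is fixed, and it is the only new piece of work needed beyond what has already been done.

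Concretely, I would proceed as follows. Given an arbitrary instance of \Estar with $M$ UEs and Genie $f_S$ defined on 2-group partitions of $[M]$, construct an instance of \Dstar on the same set $[M]$ of UEs with an extended Genie $\tilde{f}_S$ defined over all groupings $\Delta$ of $[M]$: for any $\Delta$ that partitions $[M]$ into exactly two non-empty groups $\{G_1,G_2\}$ set $\tilde{f}_S(\Delta) = f_S(G_1,G_2)$; for any $\Delta$ with a number of parts $L \ne 2$, set $\tilde{f}_S(\Delta) = c$ where $c$ is strictly smaller than the minimum of $f_S$ over 2-group partitions. Since in the reduction used to prove Theorem~\ref{thm:grouping} the function $f_S$ takes values in $\{3,4\}$, taking $c=0$ (or any fixed value below 3) suffices. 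Note that $\tilde{f}_S$ is polynomial-time evaluable whenever $f_S$ is, since determining the number of parts of $\Delta$ takes only linear time and the rest is a single call to $f_S$ or a constant-time lookup.

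Now, suppose for contradiction that \Dstar admits a polynomial-time algorithm. Apply it to the constructed instance $([M],\tilde{f}_S)$ to obtain an optimal grouping $\Delta^\star$. By construction of $\tilde{f}_S$, every grouping with $L\ne 2$ yields value $c$, strictly less than the value achieved by any 2-group partition in the image of $f_S$. Hence $\Delta^\star$ must be a 2-group partition $\{G_1^\star,G_2^\star\}$ of $[M]$, and by the definition of $\tilde{f}_S$ on 2-group partitions this pair maximizes $f_S$ over all valid 2-group partitions. Thus $\{G_1^\star,G_2^\star\}$ is an optimal solution to the given instance of \Estar, and the whole reduction is polynomial time. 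Combined with Theorem~\ref{thm:grouping}, this contradicts $\mathrm{P}\ne\mathrm{NP}$ unless \Dstar is itself NP-hard.

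I do not anticipate a substantive obstacle: all the conceptual work has been absorbed into Theorem~\ref{thm:grouping}, and the remaining step is a standard ``restriction is at least as easy as the general problem'' argument. The only mild subtlety is ensuring that the extension $\tilde{f}_S$ does not accidentally make a non-2-group partition optimal; this is handled by the explicit gap between $c$ and the range of $f_S$ in the reduction of Theorem~\ref{thm:grouping}. The corollary then follows immediately.
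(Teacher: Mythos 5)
Your proposal is correct and follows essentially the same route as the paper, which simply asserts that \Estar{} is a restricted special case of \Dstar{} and concludes NP-hardness from Theorem~\ref{thm:grouping}. Your version merely makes that restriction argument explicit (extending the Genie to non-two-part groupings with a dominated value $c$), which is a careful spelling-out of the same one-line idea rather than a different approach.
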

 
 \begin{proof}
  The result follows from Theorem~\ref{thm:grouping}. Since \Estar, a simpler version of \Dstar is NP-hard, so is \Dstar.
 \end{proof}

 \subsection{Proof of Lemma~\ref{reward}} \label{Eproof}
 
 \begin{proof} 
 We have $s_{d^\star} \in \arg\max_{s_d} E(s_d)$ i.e. $E(s_{d^\star}) \geq E(s_d)$ for every $s_d \in \chi$. The solution for the BLP \Bstar \ corresponding to the state $s_{d^\star}$, $\{x_{ij}^\star\}_{i,j}$ is obtained as follows:
   \begin{equation*}
   x_{ij}^\star = \begin{cases}
                   {1}, & \forall \ j \in \overline{V}_{id^\star}, \\
                   {0}, & \text{otherwise}.
                  \end{cases}
  \end{equation*}
 In LTE, the rates achievable in a PRB are discrete and can take $15$ different values corresponding to the $15$ possible CQI values~\cite{seventh}. 
 The minimum rate that can be provided in a single PRB is $16$ kbps. We will denote this by $r_m$. Since the value of $E(.)$ depends on the value of $R$, two cases arise:\\
  $\bullet$ $R \leq r_m$ : In this case, we can satisfy all groups by allocating a single PRB to every group. 
  This is a trivial case and so, it is sufficient to consider the case with $R>r_m$. \\
  $\bullet$ $R > r_m$ : Before proving that $\{x_{ij}^\star\}_{i,j}$ is the optimal solution of \Bstar, we will first show that $\{x_{ij}^\star\}_{i,j}$ is a feasible solution of \Bstar. Suppose $\{x_{ij}^\star\}_{i,j}$ is not a feasible solution
  of \Bstar. This means, that there exists $i \in [L]$ such that $\sum_{j=1}^{N}x_{ij}^\star r_{ij} < R$. Then the reward of $s_{d^\star}$ will be:
  \begin{equation} \label{reward_star}
   E(s_{d^\star}) = ( N - \sum_{i \in [L]} \sum_{j\in [N]} x_{ij}^\star) - \sum_{i=1}^L \left[ R - \ell_{d^\star i} \right]^+ + q_{d^\star}.
  \end{equation}
Note that $q_{d^\star} < L$ because $\{x_{ij}^\star\}_{i,j}$ is infeasible.
Depending on the value of $\sum_{i \in [L]} \sum_{j\in [N]} x_{ij}^\star$, two cases arise: 
\begin{enumerate}[wide = 0pt]
 \item $\sum_{i \in [L]} \sum_{j\in [N]} x_{ij}^\star < N$: For this case, consider a state $s_d$ obtained from $s_{d^\star}$ by allotting one of the PRBs, $j' \in \overline{V}_{0d^\star}$
to one of the unsatisfied groups $i'$. On allocating $j'$ to $i'$, one of two things can happen: \\
$\bullet$ Rate requirement of the group $i'$ is satisfied: This means that $q_d = q_{d^\star} + 1$. The reward of the resulting $s_d$ will be:
\begin{equation*}
 E(s_d)  = E(s_{d^\star}) + (R - \ell_{d^\star i'}).
\end{equation*}
Since group $i'$ was unsatisfied in state $s_{d^\star}$, $(R - \ell_{d^\star i'}) > 0$. Therefore, $E(s_d) > E(s_{d^\star})$ which is a contradiction because $E(s_{d^\star}) \geq E(s_d)$ for every $s_d \in \chi$.\\

$\bullet$ Rate requirement of the group $i'$ is not satisfied: In this case, the reward of the state $s_d$ will be:
\begin{eqnarray*}
 E(s_d) = E(s_{d^\star}) - 1 + ( \ell_{di'} -\ell_{d^\star i'}).
\end{eqnarray*}
Here, $( \ell_{di'} -\ell_{d^\star i'})$ is the additional rate provided to group $i'$ by the PRB $j'$ which is why it can be no less than $r_m$. Since $r_m >1$,
$E(s_d) > E(s_{d^\star})$ which is a contradiction.

\item $\sum_{i \in [L]} \sum_{j\in [N]} x_{ij}^\star = N$: Here, the reward of $s_{d^\star}$ is:
\begin{equation*} 
   E(s_{d^\star}) = q_{d^\star} - \sum_{i=1}^L \left[ R - \ell_{d^\star i} \right]^+.
  \end{equation*}
  Since \Bstar is feasible, let $s_{d'}$ be a state corresponding to a feasible solution $\{x_{ij}\}_{i,j}$. The reward of $s_{d'}$ will be:
  \begin{equation*} 
   E(s_{d'}) = ( N - \sum_{i \in [L]} \sum_{j\in [N]} x_{ij}) + L > E_{s_{d^\star}},
  \end{equation*}
which is a contradiction.
 \end{enumerate}

Therefore, $\{x_{ij}^\star\}_{i,j}$ has to be a feasible solution of \Bstar. All we need to complete the proof is to show that $\{x_{ij}^\star\}_{i,j}$ is also
an optimal solution of \Bstar. We show this as follows: \par

%
%

Suppose $\{x_{ij}^\star\}_{i,j}$ is not an optimal solution of \Bstar. Let's denote the optimal solution of \Bstar by $\{\overline{x}_{ij}\}_{i,j}$ and the corresponding resource allocation state by $s_{\overline{d}}$.
Since $\{x_{ij}^\star\}_{i,j}$ is not the optimal solution, we will have, $\sum_{i\in [L]} \sum_{j \in [N]} x_{ij}^\star > \sum_{i\in [L]} \sum_{j \in [N]} \overline{x}_{ij}$
The reward of $s_{\overline{d}}$ will be:
\begin{eqnarray*}
 &E(s_{\overline{d}}) = (N -\sum_{i\in [L]} \sum_{j \in [N]} \overline{x}_{ij}) + L,\\
 &\implies E(s_{\overline{d}}) > (N -\sum_{i\in [L]} \sum_{j \in [N]} x_{ij}^\star) + L = E(s_{d^\star}),
\end{eqnarray*}
which is a contradiction. Therefore, $\{x_{ij}^\star\}_{i,j}$ is an optimal solution of the BLP \Bstar.
 \end{proof}

\ifCLASSOPTIONcompsoc

\ifCLASSOPTIONcaptionsoff
  \newpage
\fi



%

\bibliographystyle{ieeetr}
\addcontentsline{toc}{chapter}{Bibliography}
\bibliography{myrefs}


%

\begin{IEEEbiography}[{\includegraphics[width=1in,height=1.5 in,clip,keepaspectratio]{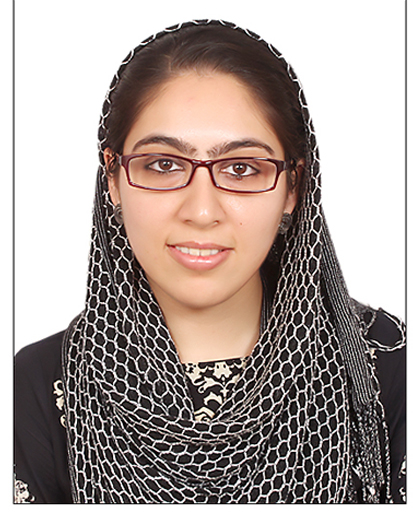}}]{Sadaf ul Zuhra} received her B.Tech degree from NIT Srinagar, India, in 2014.
 Currently, she is a research scholar in the Indian
Institute of Technology Bombay, Mumbai. Her research interests include resource allocation amd scheduling for wireless networks, video streaming and multicast communication.
\end{IEEEbiography}
\begin{IEEEbiography}[{\includegraphics[width=1in,height=1.5 in,clip,keepaspectratio]{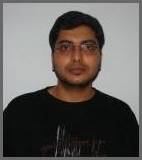}}]{Prasanna Chaporkar} received the MS degree from the Faculty of Engineering, Indian Institute of Science, Bangalore, India, in 2000,
and the PhD degree from the University of Pennsylvania, Philadelphia, Pennsylvania, in 2006. He was a
ERCIM post-doctoral fellow with ENS, Paris, France, and NTNU, Trondheim, Norway. Currently, he is an associate professor in the Indian
Institute of Technology, Mumbai. His research interests include resource allocation, stochastic
control, queueing theory, and distributed systems and algorithms.
\vspace{-17.5cm}
\end{IEEEbiography}
\begin{IEEEbiography}[{\includegraphics[width=1in,height=1.5 in,clip,keepaspectratio]{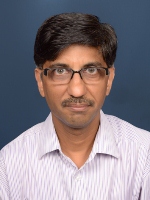}}]{Abhay Karandikar}
is currently the Director of IIT Kanpur (on leave from IIT Bombay).
In IIT Bombay, he served as Dean (Faculty Affairs) from 2017 to 2018 and the Head of the Electrical Engineering Department from 2012 to 2015.
Prof Karandikar is the founding member and chairman of Telecom Standards Development Society (TSDSI), India’s standards body for telecom setup in 2014.
His research interests include resource allocation in wireless networks, heterogeneous networks and rural broadband. Detailed biography can be found
at \url{https://www.ee.iitb.ac.in/~karandi/}.
\end{IEEEbiography}




\end{document}